\newtheorem{theorem}{Theorem} 
\newtheorem{lemma}{Lemma}
\newtheorem{definition}{Definition}
\newtheorem{remark}{Remark}
\title{Covert Communication Gains from Adversary's Uncertainty of Phase Angles}
\author{
\IEEEauthorblockN{Sen Qiao}, \IEEEauthorblockN{Daming Cao}, \IEEEauthorblockN{Qiaosheng Zhang}, \IEEEauthorblockN{Yinfei Xu,~\IEEEmembership{Member,~IEEE,}} and \IEEEauthorblockN{Guangjie Liu}
\thanks{This work was supported by the National Key R\&D Program of China (Grants No.\ 2021QY0700), the National Natural Science Foundation of China (Grants No.\ U21B2003,\ 62072250), the Startup Foundation for Introducing Talent of NUIST (Grants No.\ 2023r014) and Zhi Shan Young Scholar Program of Southeast University.}

\thanks{Sen Qiao, Daming Cao and Guangjie Liu are with the School of Electrical and Information Engineering, Nanjing University of Information Science and Technology, Nanjing, 210044, China (e-mail: sensariel@nuist.edu.cn; dmcao@nuist.edu.cn; gjieliu@gmail.com).}
\thanks{Qiaosheng Zhang is with Shanghai Artificial Intelligence Laboratory, Shanghai, 200032, China (e-mail: zhangqiaosheng@pjlab.org.cn).}
\thanks{Yinfei Xu is with the School of Information Science and Engineering, Southeast University, Nanjing 210096, China (e-mail: yinfeixu@seu.edu.cn).}
\thanks{Corresponding author: Daming Cao.}
}
\date{October 2022}
\begin{document}

\maketitle
\allowdisplaybreaks[1]

\begin{abstract}
    This work investigates the phase gain of intelligent reflecting surface (IRS) covert communication over complex-valued additive white Gaussian noise (AWGN) channels. The transmitter Alice intends to transmit covert messages to the legitimate receiver Bob via reflecting the broadcast signals from a radio frequency (RF) source, while rendering the adversary Willie's detector arbitrarily close to ineffective. Our analyses show that, compared to the covert capacity for classical AWGN channels, we can achieve a covertness gain of value 2 by leveraging Willie's uncertainty of phase angles. This covertness gain is achieved when the number of possible phase angle pairs $N=2$. More interestingly, our results show that the covertness gain will not further increase with $N$ as long as $N \ge 2$, even if it approaches infinity.
\end{abstract}
\begin{IEEEkeywords}
Covert communication, Intelligent reflecting surface, Phase shift keying, Phase deflection, Covertness gain from Phase.
\end{IEEEkeywords}

\section{Introduction}
\IEEEPARstart{I}{n} certain complicated antagonistic realms (such as military communications), even a little exposed intention of communication may lead to significant strategic failures. Consequently, the military has developed diverse techniques (e.g. the spread spectrum technique \cite{simon2002spread,Chen2023Covert,Bash2015Hiding,cek2009stable}) to ensure the covertness of communication, i.e., to hide the very presence of communication from watchful adversaries. From the theoretical perspective, the information-theoretic limit of covert communication was first investigated by \cite{bash2012square}, which discovered a $\textit{square root law}$ (SRL) for additive white Gaussian noise (AWGN) channels. This seminal theorem has subsequently been extended to various channel models, including binary symmetric channels\cite{che2013reliable}, discrete memoryless channels\cite{wang2016fundamental,Tahmasbi2019first,bloch2016covert} and multiuser channels\cite{arumugam2019covert,Kumar2019embedding,cho2021treating,Kibloff2019embedding,Tan2019time}, etc.

In the covert communication scenario, the transmitter Alice occasionally wishes to transmit a message to the legitimate receiver Bob over a noisy channel, while simultaneously ensures that the adversary Willie is not able to detect the transmission (if exists).  The SRL states that to ensure both covertness and reliability, only $\mathcal{O}\big(\sqrt{n}\big)$ bits can be transmitted over $n$ channel uses. Note that the transmission rate approaches zero as $n$ grows to infinity.

\textcolor{black}{Prior works have put forth diverse strategies to improve the performance of covert communication, including relaying networks\cite{Hu2018Covert,Forouzesh2020Covert}, multiple interference networks\cite{he2018covert,zheng2019multi}, unmanned aerial vehicle (UAV) networks\cite{Yan2021Optimal,zhou2019joint}, multi-user networks\cite{huang2021jamming}, etc.} In particular, Lu et al.\cite{lu2020intelligent} noticed that the intelligent reflecting surface (IRS) \textcolor{black}{(a.k.a. the reconfigurable intelligent surface (RIS))} has the capability of enhancing the received signal at the receiver side while simultaneously deteriorating the signal at the warden side. In their setting, Alice transmits messages covertly by reflecting her signal to Bob, or reflecting additional noise to Willie via IRS devices. Following their pioneering work, recent works \cite{wang2021energy,liu2022covert,ma2022covert} further show that for IRS covert communication, Willie's uncertainty about noise can be appropriately leveraged to enhance the covert performance. Besides, the optimization of transmission power and reflection beamforming in IRS networks have also been investigated in \cite{wu2021intelligent,lv2022covert,Wu2019Intelligent} and \cite{si2021covert,zhou2021joint,wang2021energy}, respectively. Moreover, the covert communication in UAV mounted IRS (UIRS) communication systems has also been investigated in \cite{Tatar2022Aerial,wang2023covert}.

Different from the methods in the aforementioned works \cite{Hu2018Covert,Forouzesh2020Covert,Yan2021Optimal,he2018covert,zheng2019multi,Yan2019Delay-Intolerant,huang2021jamming,zhou2019joint,lu2020intelligent,wu2021intelligent,lv2022covert,si2021covert,zhou2021joint,Wu2019Intelligent,Tatar2022Aerial,wang2023covert,wang2021energy,liu2022covert,ma2022covert}, utilizing other resources, such as the spectrum and time resource, has also been proven to be effective approaches to improve the performance of covert communication. In \cite{wang2021covert}, Wang et al. investigated the problem of covert communication over Multiple-Input Multiple-Output (MIMO) AWGN channels, where all users are equipped with multiple antennas. Furthermore, the authors in \cite{bash2016covert} considered utilizing the time resource to enhance the performance of covert communication. In their setting, Alice and Bob are allowed to secretly choose one single time slot (out of $T(n)$ slots) to communicate, and \cite{bash2016covert} showed that they can transmit $\mathcal{O}\Big(\min\{\sqrt{n\log{T(n)}},{n}\}\Big)$ bits reliably and covertly when Willie does not have the knowledge of the chosen slot. 
\begin{figure*}[ht]
    \centering
\subfloat[Not deflect symbols]{
    	\begin{tikzpicture}[
		roundnode/.style={ circle, draw=white!100, fill=black!100, very thick, minimum size=1mm},
		squarednode/.style={rectangle, draw=black!100, fill=white!100, thick, minimum size=30mm, dotted},
		pointnode/.style={rectangle, draw=white!1, very thick, minimum size=1mm},
		HHnode/.style={align=center, dotted,rectangle, draw=black!100, fill=white!100, very thick, minimum size=10mm},
]
\node[roundnode]        (zero1)         {}   ;
\node[pointnode]        (zr1)       [right=1cm of zero1]   {}   ;
\node[pointnode]        (zl1)       [left=1cm of zero1]   {}   ;
\node[pointnode]        (za1)       [above=1cm of zero1]   {}   ;
\node[pointnode]        (zb1)       [below=1cm of zero1]   {}   ;

\node[squarednode]        (kuang)         [at =(zero1)]  {} ;
\node[pointnode]        (zero)         [right=3cm of zero1]  {} ;
\node[pointnode]        (zr)       [right=1cm of zero]   {}  ;
\node[pointnode]        (zl)       [left=1cm of zero]   {}   ;
\node[pointnode]        (za)       [above=1cm of zero]   {}   ;
\node[pointnode]        (zb)       [below=1cm of zero]   {}   ;
\node[squarednode]        (kuang2)         [at =(zero)]  {} ;
\node at (zero1)[circle,fill,inner sep=2pt]{};
\node[pointnode]        (zar)       [above right =0.75cm and 0.75cm of zero]   {}  ;
\node[pointnode]        (zbl)       [below left =0.75cm and 0.75cm of zero]   {}   ;
\node  (01) at ([shift={(1em,-0.5em)}]za1) {$y$};
\node  (02) at ([shift={(-0.5em,1em)}]zr1) {$x$};
\node  (05) at ([shift={(0.5em,2.8em)}]zl1) {$\mathcal{H}_0$};
\node  (06) at ([shift={(0.5em,2.8em)}]zl) {$\mathcal{H}_1$};
\node  (03) at ([shift={(1em,-0.5em)}]za) {$y$};
\node  (04) at ([shift={(-0.5em,1em)}]zr) {$x$};

\draw[->,line width=0.4mm] (zl) -- (zr);
\draw[->,line width=0.4mm] (zb) -- (za);
\draw[->,line width=0.4mm] (zl1) -- (zr1);
\draw[->,line width=0.4mm] (zb1) -- (za1);
\draw[latex-latex,line width=0.5mm] (zbl) -- (zar);
\end{tikzpicture} }\qquad
\subfloat[Deflect each symbol]{
	\begin{tikzpicture}[
		roundnode/.style={ circle, draw=white!100, fill=black!100, thick, minimum size=1mm},
		squarednode/.style={rectangle, draw=black!100, fill=white!100, thick, minimum size=30mm, dotted},
		pointnode/.style={rectangle, draw=white!1, very thick, minimum size=1mm},
		HHnode/.style={align=center, dotted,rectangle, draw=black!100, fill=white!100, very thick, minimum size=10mm},
]
\node[pointnode]        (zero)         []  {} ;
\node[pointnode]        (zr)       [right=1cm of zero]   {}  ;
\node[pointnode]        (zl)       [left=1cm of zero]   {}   ;
\node[pointnode]        (za)       [above=1cm of zero]   {}   ;
\node[pointnode]        (zb)       [below=1cm of zero]   {}   ;
\node[squarednode]        (kuang2)         [at= (zero)]  {} ;
\node[pointnode]        (zar)       [above right =0.75cm and 0.75cm of zero]   {}  ;
\node[pointnode]        (zbl)       [below left =0.75cm and 0.75cm of zero]   {}   ;
\node[pointnode]        (zal)       [below right =0.75cm and 0.75cm of zero]   {}  ;
\node[pointnode]        (zbr)       [above left =0.75cm and 0.75cm of zero]   {}   ;
\node  (06) at ([shift={(0.5em,2.8em)}]zl) {$\mathcal{H}_1$};
\node  (03) at ([shift={(1em,-0.5em)}]za) {${y}$};
\node  (04) at ([shift={(-0.5em,1em)}]zr) {$x$};


\draw[->,line width=0.4mm] (zl) -- (zr);
\draw[->,line width=0.4mm] (zb) -- (za);

\draw[latex-latex,line width=0.5mm] (zbl) -- (zar);

\draw[latex-latex,line width=0.5mm] (zbr) -- (zal);
\end{tikzpicture}}\qquad
\subfloat[Deflect $n$ symbols]{
\begin{tikzpicture}[
		roundnode/.style={ circle, draw=white!100, fill=black!100, very thick, minimum size=1mm},
		squarednode/.style={rectangle, draw=black!100, fill=white!100, thick, minimum height=30mm, minimum width=6cm, dotted},
		pointnode/.style={rectangle, draw=white!1, very thick, minimum size=1mm},
		HHnode/.style={align=center, dotted,rectangle, draw=black!100, fill=white!100, very thick, minimum size=10mm},
]
\node[pointnode]        (zero1)         {}   ;
\node[pointnode]        (zr1)       [right=1cm of zero1]   {}   ;
\node[pointnode]        (zl1)       [left=1cm of zero1]   {}   ;
\node[pointnode]        (za1)       [above=1cm of zero1]   {}   ;
\node[pointnode]        (zb1)       [below=1cm of zero1]   {}   ;

\node[pointnode]        (zero)         [right=2.5cm of zero1]  {} ;
\node[pointnode]        (zr)       [right=1cm of zero]   {}  ;
\node[pointnode]        (zl)       [left=1cm of zero]   {}   ;
\node[pointnode]        (za)       [above=1cm of zero]   {}   ;
\node[pointnode]        (zb)       [below=1cm of zero]   {}   ;
\node[squarednode]        (kuang2)         [at= (zl)]  {} ;

\node[pointnode]        (zar)       [above right =0.75cm and 0.75cm of zero1]   {}  ;
\node[pointnode]        (zbl)       [below left =0.75cm and 0.75cm of zero1]   {}   ;

\node[pointnode]        (zbr)       [below right =0.75cm and 0.75cm of zero]   {}  ;
\node[pointnode]        (zal)       [above left =0.75cm and 0.75cm of zero]   {}   ;

\node  (01) at ([shift={(1em,-0.5em)}]za1) {$y$};
\node  (02) at ([shift={(-0.5em,1em)}]zr1) {$x$};

\node  (05) at ([shift={(0.5em,2.8em)}]zl1) {$\mathcal{H}_1$};

\node  (03) at ([shift={(1em,-0.5em)}]za) {$y$};
\node  (04) at ([shift={(-0.5em,1em)}]zr) {$x$};

\node  (06) at ([shift={(0.2em,0em)}]zr1) {or};


\draw[->,line width=0.4mm] (zl) -- (zr);
\draw[->,line width=0.4mm] (zb) -- (za);

\draw[->,line width=0.4mm] (zl1) -- (zr1);
\draw[->,line width=0.4mm] (zb1) -- (za1);

\draw[latex-latex,line width=0.5mm] (zbl) -- (zar);
\draw[latex-latex,line width=0.5mm] (zbr) -- (zal);

\end{tikzpicture}}
    \caption{Hypothesis testing with and without phase}
    \label{fig:my_label}
\end{figure*}
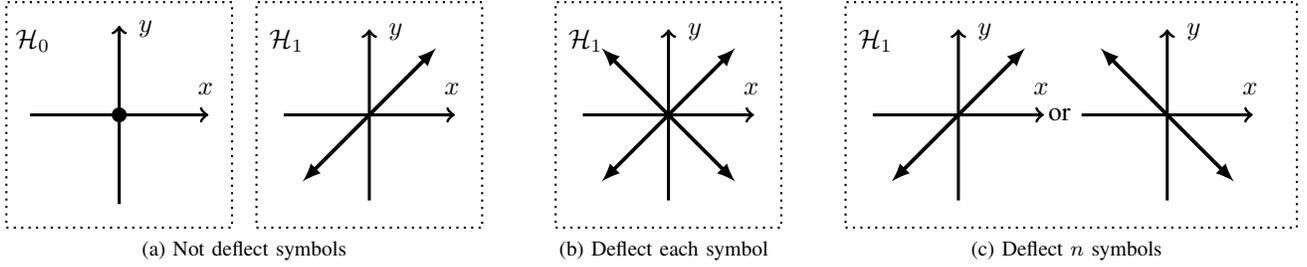

In addition to spectrum and time, \emph{phase} is another resource that can be utilized, however many communication scenarios studied in literature locate on real-valued AWGN channels\cite{bash2012square,bash2013limits,bash2016covert,bloch2016covert}. Further, to the best of our knowledge, no work has realized the benefits of utilizing phase in covert communication. Since the IRS can transmit information by varying the amplitude and/or phase of signals\cite{renzo2020smart}, it is interesting to investigate whether one can transmit more covert information by utilizing the phase resource via IRS. We note that most existing  works on IRS covert communication focus on the optimization of transmission power and reflection beamforming, without taking the effect of phases into consideration. For example, the pioneers in \cite{chen2021on} considered IRS covert communication over complex-valued AWGN channels with Binary Phase Shift Keying (BPSK) codebook, but no phase changes has been employed. Here we show how Alice can leverage Willie's ignorance of the exact phase angle to improve the covertness.

In our scenario, Alice communicates to Bob with an IRS device and a shared secret of sufficient length, and Willie may not know the exact phase angles that Alice and Bob select in advance. Willie observes over a complex-valued Gaussian channel and performs a binary hypothesis test to detect the communication. When Alice transmits with a BPSK codebook, like the setting in \cite{bash2013limits,bash2016covert,chen2021on}, Willie can perform a binary hypothesis test as shown in Fig.~1(a). For ease of notation, we express the phase angle with values between zero and $2\pi$. The whole transmission consists of $n$ symbols. All symbols are sent with two supplementary initial angles; for concreteness we set the angles to be $\pi/4$ and $5\pi/4$. The choice of different initial angles does not change Willie's ability of detection due to the symmetry of Gaussian noise. However, by utilizing IRS, Alice can also reflect each symbol with other phase angles except for the two initial angles, e.g., $3\pi/4$ and $7\pi/4$ (this is equivalent to deflect the initial angles by $\pi/2$), as shown in Fig.~1(b). Thus, Willie does not know whether the initial angles $(\pi/4, 5\pi/4)$ or the deflected angles $(3\pi/4, 7\pi/4)$ are used by Alice. Since this increases the uncertainty of Willie, Alice can achieve an improvement over a naive application of the SRL. Perhaps surprisingly, we show that it is not necessary for Alice to reflect each symbol with a different angle. Instead, Alice can achieve an improvement by confusing Willie whether all the $n$ symbols are deflected by a same phase angle (e.g., $\pi/2$) or not, as shown in Fig.~1(c). We refer the readers to Sec. \ref{Codebook Construction} and Sec. \ref{Hypothesis test} for more details about the two deflection methods.

 Deflect, or not deflect? Just one-bit information can provide a substantial covertness gain. The improvement stems from the fact that Willie does not know the exact phase and has to detect all possible phase angles. In this work, we investigate the effect of phases in covert communication for the first time. We name the covertness improvement achieved from phase resources as {\em phase gain}. Our results highlight the phase gain by comparing covert performance between codebooks with a single phase pair and codebooks with multiple phase pairs. We provide detailed achievability proofs of three different codebooks. In particular, we would like to point out that the calculation of the KL divergence of one particular codebook (with multiple phase pairs) requires non-trivial analytical techniques, since its induced output distribution cannot be transformed to $n$ single-letter distributions by the chain rule. To circumvent this difficulty, we take a non-trivial approach---using the Taylor series expansion $\displaystyle\lim_{x \to 0}\log{(1+x)}=x-\frac{1}{2}x^2+\mathcal{O}(x^3)$ to approximate the KL divergence so that it can be calculated by summing up the approximations of three terms. Moreover, we generalize our results to infinite phase angles and prove that further increasing the number of phase angles does not lead to a larger phase gain.

The rest of the paper is organized as follows. In Section II, we introduce the system model and codebooks for IRS covert communication over complex-valued AWGN channels. Section~III provides the main results of this work. \textcolor{black}{In Section IV, we present the detailed proofs of our results. Section V presents numerical results that validate our theoretical results. Section VI concludes this work and proposes several directions that are fertile avenues for future research.}

\section{Prerequisites}
\subsection{System Model}
We consider a complex-valued discrete-time AWGN channel model, as shown in Fig.~2. The RF source continuously broadcasts $n$ complex-valued random variable $\mathbf{S_{{R}}}=\{S_{R,i}\}^n_{i=1}$. The IRS transmitter (Alice) intends to transmit $n$ complex-valued symbols $\mathbf{c}=\{c_{i}\}^n_{i=1}$ to a receiver (Bob) by utilizing the broadcast signal from the RF source, while the detector (Willie) seeks to detect the existence of the transmission. Then, the $i$-th symbol that Bob and Willie received can be expressed as
\begin{align}
    &S_{{B,i}}={{h_{RB}}}S_{R,i}+{{h_{RA}}}{{h_{AB}}}{S_{R,i}c_{i}}+Z_{B,i},\\
    &S_{W,i}={{h_{RW}}}S_{R,i}+{{h_{RA}}}{{h_{AW}}}{S_{R,i}c_{i}}+Z_{W,i},
\end{align}
where $Z_{B,i}$ and $Z_{W,i}$ are independent and identically distributed (i.i.d.) Gaussian noise with variance $2\sigma^2$, i.e., $Z_{B,i},Z_{W,i}\sim \mathcal{CN}(0,2\sigma^2)$. The channel coefficients from the RF source to Alice, Bob and Willie are denoted by ${{h_{RA}}}$, ${{h_{RB}}}$ and ${{h_{RW}}}$, respectively. The channel coefficients from Alice to Bob and to Willie are denoted by ${{h_{AB}}}$ and ${{h_{AW}}}$, respectively. Without loss of generality, we assume the RF source broadcasts $n$ symbols with zero phase angle, and all channel coefficients are known to everyone.

\begin{figure}[htbp]
	\center
	\begin{tikzpicture}[
		roundnode/.style={ circle, draw=black!100, fill=white!100, very thick, minimum size=10mm},
		squarednode/.style={rectangle, draw=black!100, fill=white!100, very thick, minimum height=10mm, minimum width=20mm},
		RF/.style={rectangle, draw=black!100, fill=green!30, very thick, minimum height=10mm, minimum width=20mm},
		willie/.style={rectangle, draw=black!100, fill=blue!20, very thick, minimum height=10mm, minimum width=20mm},
		AB/.style={rectangle, draw=black!100, fill=red!30, very thick, minimum height=10mm, minimum width=20mm},
		pointnode/.style={rectangle, draw=white!1, very thick, minimum size=1mm},
		HHnode/.style={align=center, dotted,rectangle, draw=black!100, fill=white!100, very thick, minimum size=10mm},
]

\node[AB]      (RF source)                        {Alice(IRS)};
\node[pointnode]        (opoint)       [right=2cm of RF source]   {}   ;
\node[RF]        (Alice)       [above=1.5cm of opoint] {RF source};
\node[AB]      (Bob)       [right=2cm of opoint] {Bob};
\node[willie]        (Willie)       [below=1.5cm of opoint] {Willie};
\node[HHnode]      (H0H1)          [right=1cm of Willie] {$H_0:Q_0^n$\\$H_1:Q_1^n$};

\node (01) at ([shift={(-3.5em,-1em)}]Alice.west) {$\textbf{S}_\mathrm{R}$};
\node (02) at ([shift={(2em,1em)}]RF source.east) {$\textbf{S}_\mathrm{R}\textbf{c}$};
\node (03) at ([shift={(-3.5em,1em)}]Willie.west) {$\textbf{S}_\mathrm{R}\textbf{c}$};
\node (04) at ([shift={(1em,1em)}]Willie.north) {$\textbf{S}_\mathrm{R}$};
\node (05) at ([shift={(3em,-1em)}]Alice.east) {$\textbf{S}_\mathrm{R}$};

\draw[-latex,dashed]  (Alice.west)--(RF source.north) ;
\draw[-latex,dashed] (Alice.south) -- (Willie.north);
\draw[-latex,dashed] (Alice.east) -- (Bob.north);
\draw [-latex] (RF source.east) --(Bob.west);
\draw [-latex] (RF source.south) --(Willie.west);
\draw[-latex] (Willie.east) -- (H0H1.west);
\end{tikzpicture}
	\caption{System model for IRS covert communication}
\end{figure}
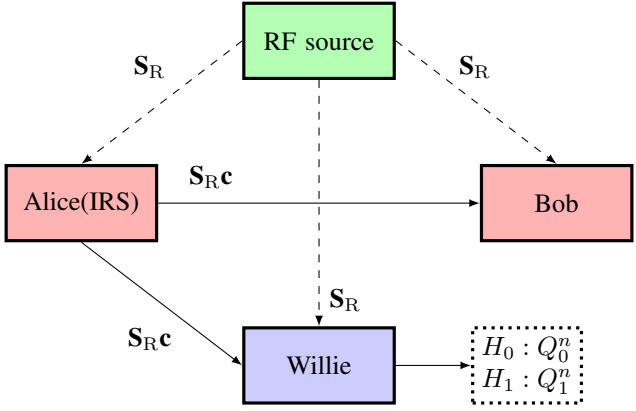

\subsection{Codebook Construction}
\label{Codebook Construction}
Alice transmits covert information to Bob by varying the reflection coefficient of IRS, including the amplitude and phase coefficient. Specifically, Alice transmits a uniformly-distributed message $W\in[\![1,M]\!]$ to Bob by encoding it into a codeword $\mathbf{c}^n=\big[{c}_1,{c}_2,...,{c}_n\big]$ of blocklength $n$. For each symbol $c_i$ of the codeword, we use its amplitude $\|c_i\|$ and phase $\theta_{c_i}$ (instead of its $x$-component $c_{i,x}$ and $y$-component $c_{i,y}$) to describe it. Clearly, we have
\begin{align}
    \|c_i\|^2 &= \|c_{i,x}\|^2 +\|c_{i,y}\|^2,\\
    \tan(\theta_{c_i}) &= c_{i,y} / c_{i,x}.
\end{align}

In the following section, we use $c_i$ and $(\|c_i\|,\theta_{c_i})$ interchangeably when there is no confusion. In this work, we consider three codebooks, and the constructions of the three codebooks are provided as follows:

\subsubsection{The BPSK codebook}
\label{BPSK}
In this codebook, each symbol $c_i$ has the same amplitude $\beta$ and two possible phases $\theta$ and $\theta+\pi$. That is, $c_i$ equals either $(\beta, \theta)$ or $(\beta, \theta+\pi)$. All symbols in this work are constructed in pairs, such as $(\beta, \theta)$ and $(\beta, \theta+\pi)$. For ease of notation, we re-denote the symbol $(\beta, \theta+\pi)$ as $(-\beta, \theta)$, and we also call these two angles, e.g., $\theta$ and $\theta+\pi$, as \textit{one angle pair}. We sample the codeword $\mathbf{c}^n$ independently and randomly according to the distribution $P_{B}^n(\mathbf{c}^n)=\prod_{i=1}^{n}P_{B}(c_i)$ with $P_B(c_i = (\beta, \theta)) = P_B(c_i = (-\beta, \theta)) = 1/2$.

\subsubsection{The 2N-PSK codebook}
\label{NPSK}
In this codebook, each symbol $c_i$ has the same amplitude $\beta$ and $2N$ possible phases $\frac{t\pi}{N}$, $t=1,2,\dots,2N$. Equivalently, each symbol can be described as $c_i = (\beta, \frac{t\pi}{N})$ or $c_i = (-\beta, \frac{t\pi}{N})$, $t=1,2,\dots,N$. Now, we sample the codeword $\mathbf{c}^n$ independently and randomly according to $P_{2N}^n(\mathbf{c}^n)=\prod_{i=1}^{n}P_{2N}(c_i)$ with $P_{2N}(c_i=(\beta,\frac{t\pi}{N}))=P_{2N}(c_i=(-\beta,\frac{t\pi}{N}))=\frac{1}{2N}, t=1,2,...,N$.

\subsubsection{The N-BPSK codebook}
\label{BPSK+}
This codebook is transformed from the BPSK codebook. The phase angle of each codeword is additionally added a uniformly random distributed phase angle. In other words, Alice transmits each codeword in BPSK with an additional phase angle $\widehat{\theta}$, which is selected uniformly at random from $\{\frac{\pi}{N}, \frac{2\pi}{N},...,\frac{(N-1)\pi}{N},\pi\}$. This additional phase angle is shared \textit{confidentially} with Bob in advance.\footnote{\textcolor{black}{In many covert communication scenarios, Alice shares a key with Bob before the communication, typically of size $\mathcal{O}(\sqrt{n})$ bits. So it is reasonable to assume that Alice and Bob share an additional key of size $\mathcal{O}(\log N)$ perfectly and confidentially to reach a consensus on the phase angle.}} Compared to the 2N-PSK codebook, each symbol in a codeword from N-BPSK only has two possible phase angles instead of $2N$ possible phase angles. Specifically, in the 2N-PSK codebook, the phases of different symbols can belong to different angle pairs, e.g., $c_1=(\beta,\frac{\pi}{N})$ and $c_2=(\beta,\frac{2\pi}{N})$. However, in the N-BPSK codebook, the phases of different symbols must belong to one angle pair, e.g., if $c_1=(\beta,\frac{\pi}{N})$, we have $c_2=(\beta,\frac{\pi}{N})$ or $c_2=(-\beta,\frac{\pi}{N})$.

It is assumed that the codebook is revealed to Willie, including the value of amplitude gain $\beta$, and the set of all possible angles, i.e., $\{\theta, \frac{\pi}{N}, \frac{2\pi}{N},...,\frac{(N-1)\pi}{N},\pi\}$.

\subsection{Hypothesis test}
\label{Hypothesis test}
We make a practical assumption that Willie can recover the broadcast information $\mathbf{S_{{R}}}$ and subtract it from his observations to enhance the detection performance. Hence, the $i$-th symbol received by Willie can be equivalently rewritten as
\begin{align}
    &S_{W,i}={{h_{RA}}}{{h_{AW}}}{S_{R,i}c_{i}}+Z_{W,i}.
\end{align}

 Considering a quasi-static flat fading channel with coefficients ${{h_{RA}}}$ and ${{h_{AW}}}$, we define the expected amplitude of complex-value ${{h_{RA}}}{{h_{AW}}}{S_{{R}}}$ as $A$, i.e,
 \begin{align}
   \mathbb{E}(|{{h_{RA}}}{{h_{AW}}}{S_{{R}}}|^2)=A^2,
 \end{align}
and we define the expected phase of ${{h_{RA}}}{{h_{AW}}}{S_{{R}}}$ as ${\theta_0}$. To determine whether Alice is communicating, Willie performs a binary hypothesis test\cite{lehmann2005testing} based on $n$ successive observations $\mathbf{S_{W}}=\{S_{W,i}\}^n_{i=1}$. Specifically, let the null hypothesis ($\mathcal{H}_0$) denote that no communication is taking place, where each sample $S_{W,i}=Z_{W,i}$ is an i.i.d. complex-Gaussian random variable distributed according to $\mathcal{CN}(0,2\sigma^2)$. The alternative hypothesis ($\mathcal{H}_1$) denotes that communication is taking place and each sample $S_{W,i}=h_{RA}h_{AW}S_{R,i}c_i+Z_{W,i}$. Willie aims to distinguish these two hypotheses:
\begin{align}
    \mathcal{H}_0:\quad &S_{W,i}=Z_{W,i},\\
    \mathcal{H}_1:\quad &S_{W,i}={{h_{RA}}}{{h_{AW}}}{S_{R,i}c_{i}}+Z_{W,i}.\label{Eq.5}
\end{align}

Let ${Q}^n_0$ (resp. $\bar{Q}^{(n)}_1$) denote the probability distribution of Willie's $n$ observations when $\mathcal{H}_0$ (resp. $\mathcal{H}_1$) is true. The probability of \emph{false alarm} (i.e., rejecting $\mathcal{H}_0$ when it is true) is denote by $\mathcal{P}_{FA}$, and the probability of \emph{missed detection} (i.e., accepting $\mathcal{H}_0$ when it is false) is denoted by $\mathcal{P}_{MD}$. 
We assume that the distribution ${Q}_0^n$ and $\bar{Q}^{(n)}_1$ are known to Willie, and Willie can perform an optimal statistical hypothesis test that satisfies $\mathcal{P}_{FA}+\mathcal{P}_{MD}=1-\mathbb{V}(\bar{Q}_1^{(n)}\|{Q}_0^{n})$\cite{bash2013limits}. By using the definition of KL divergence and Pinsker’s inequality, we can obtain that the optimal test satisfies $\mathcal{P}_{FA}+\mathcal{P}_{MD}\geq 1-\sqrt{\mathcal{D}(\bar{Q}_1^{(n)}\|{Q}_0^{n})}$. It is possible for Willie to perform a blind test when the sum of error probabilities equal one, i.e., $\mathcal{P}_{FA}+\mathcal{P}_{MD}= 1$. And the objective of covert communication is to guarantee that Willie's statistical test is not much better than blind test. Therefore, we can achieve covert communication amounts to ensuring that $\mathcal{D}(\bar{Q}_1^{(n)}\|{Q}_0^{n})$ is negligible, i.e., ensuring the KL divergence
\begin{align}
    \mathcal{D}(\bar{Q}_1^{(n)}\|{Q}_0^{n})\leq\epsilon,\label{Eq.6}
\end{align}
for an arbitrarily small value $\epsilon\in(0,1)$. Based on the covertness constraint in \eqref{Eq.6}, we now turn to analyze the KL divergences for the three codebooks described in Sec. \ref{Codebook Construction}:
\begin{enumerate}
    \item BPSK: In this case, Alice transmits $(-\beta,\theta)$ or $(\beta,\theta)$ equiprobably. Willie knows how the codebook constructed and the set of all possible angles, and receives the symbols corrupted by AWGN. Therefore, the probability distribution of $\mathbf{S_{W}}$ under $\mathcal{H}_0$ can be expressed as
\begin{align}
    {Q}_0^n(x^n,y^n)=\prod_{i=1}^n\frac{1}{{2\pi}\sigma^2}\exp\bigg({-\frac{x_i^2+y_i^2}{2\sigma^2}}\bigg)\label{Eq.6.5},
\end{align}
and the probability distribution of $\mathbf{S_{W}}$ under $\mathcal{H}_1$ is given by \eqref{def:Q_1} on the top of the next page, where $A$ and ${\theta_0}$ are the expected amplitude and phase of the complex value ${{h_{RA}}}{{h_{AW}}}\mathbf{S_{{R}}}$. For simplicity we use $\mathcal{D}_{B}$ to denote the KL divergence between the distributions ${Q}_1^{n}(x^n,y^n)$ and ${Q}_0^{n}(x^n,y^n)$, i.e., $\mathcal{D}_{B}\triangleq\mathcal{D}({Q}_1^{n}\|{Q}_0^{n})$. 
\begin{figure*}
\begin{align}
    Q_1^n(x^n,y^n)&= \prod_{i=1}^{n}\frac{1}{2}\frac{1}{{2\pi}\sigma^2}\Bigg[\exp{\bigg(-\frac{(x_i+A\beta\cos(\theta_0+{\theta}))^2}{2\sigma^2}-\frac{(y_i+A\beta\sin(\theta_0+{\theta}))^2}{2\sigma^2}\bigg)}\nonumber\\
    &+\exp{\bigg(-\frac{(x_i-A\beta\cos{(\theta_0+{\theta})})^2}{2\sigma^2}-\frac{(y_i-A\beta\sin(\theta_0+{\theta}))^2}{2\sigma^2}\bigg)}\Bigg],\label{def:Q_1}\\
    \widetilde{Q}_1^{n}(x^n,y^n)&= \prod_{i=1}^{n}\frac{1}{2N}\frac{1}{{2\pi}\sigma^2}\sum_{t=1}^N\Bigg[\exp{\bigg(-\frac{(x_i+A\beta\cos{({\theta_0}+\frac{t\pi}{N})})^2}{2\sigma^2}-\frac{(y_i+A\beta\sin{({\theta_0}+\frac{t\pi}{N})})^2}{2\sigma^2}\bigg)}\nonumber\\
    &+\exp{\bigg(-\frac{(x_i-A\beta\cos{({\theta_0}+\frac{t\pi}{N})})^2}{2\sigma^2}-\frac{(y_i-A\beta\sin{({\theta_0}+\frac{t\pi}{N})})^2}{2\sigma^2}\bigg)}\Bigg]\label{Eq.7},\\
    \widehat{Q}_1^{(n)}(x^n,y^n)&= \frac{1}{N}\sum_{t=1}^N\prod_{i=1}^{n}\frac{1}{2}\frac{1}{{2\pi}\sigma^2}\Bigg[\exp{\bigg(-\frac{(x_i+A\beta\cos{({\theta{'}}+\frac{t\pi}{N})})^2}{2\sigma^2}-\frac{(y_i+A\beta\sin{({\theta{'}}+\frac{t\pi}{N})})^2}{2\sigma^2}\bigg)} \nonumber\\
    &+\exp{\bigg(-\frac{(x_i-A\beta\cos{({\theta{'}}+\frac{t\pi}{N})})^2}{2\sigma^2}-\frac{(y_i-A\beta\sin{({\theta{'}}+\frac{t\pi}{N})})^2}{2\sigma^2}\bigg)}\Bigg]\label{Eq.8}.
\end{align}\hrulefill
\end{figure*}
    \item 2N-PSK: In this case, Alice transmits each symbol with amplitude gain $-\beta$ or $\beta$ equiprobably and the phase angle ${\theta}$ is drawn uniformly at random from $\{\frac{\pi}{N}, \frac{2\pi}{N},...,\frac{(N-1)\pi}{N},\pi\}$. The probability distribution of $\mathbf{S_{W}}$ under $\mathcal{H}_0$ is given by $Q_0^n(x^n,y^n)$ stated in \eqref{Eq.6.5}, and the probability distribution of $\mathbf{S_{W}}$ under $\mathcal{H}_1$ can be expressed as \eqref{Eq.7} on the top of the next page. Let $\mathcal{D}_{2N}\triangleq\mathcal{D}(\widetilde{Q}_1^{n}\|{Q}_0^{n})$ denotes the KL divergence between the distributions $\widetilde{Q}_1^{n}(x^n,y^n)$ and ${Q}_0^{n}(x^n,y^n)$.
   \item N-BPSK: When Alice sends covert messages with N-BPSK, she selects a phase angle $\widehat{\theta}$ uniformly at random from $\{\frac{\pi}{N}, \frac{2\pi}{N},...,\frac{(N-1)\pi}{N},\pi\}$, and deflects the $n$ symbols with this selected phase. Then, Alice transmits each symbol with $\beta$ or $-\beta$ equiprobably. The probability distribution of $\mathbf{S_{W}}$ under $\mathcal{H}_0$ is given by $Q_0^n(x^n,y^n)$ stated in \eqref{Eq.6.5}, and the probability distribution of $\mathbf{S_{W}}$ under $\mathcal{H}_1$ is given by \eqref{Eq.8} on the top of the next page, where $\theta'=\theta_0+\theta$. Let $\mathcal{D}_{NB}\triangleq\mathcal{D}(\widehat{Q}_1^{(n)}\|{Q}_0^{n})$ denotes the KL divergence between the distributions $\widehat{Q}_1^{(n)}(x^n,y^n)$ and ${Q}_0^{n}(x^n,y^n)$.
\end{enumerate}

\begin{remark}
It is worth noting that $\widetilde{Q}_1^{n}(x^n,y^n)$ is an $n$-letter product distribution. In other words, the phase of each symbol in $\widetilde{Q}_1^{n}(x^n,y^n)$ is selected independently. However, $\widehat{Q}_1^{(n)}(x^n,y^n)$ is the sum of $N$ distributions, where each distribution in the summation is an $n$-letter product distribution. Further, the phases of all symbols in an inner $n$-letter product distribution belong to one angle pair.
\end{remark}
\textcolor{black}{\begin{remark}The uniform distribution is the classic setting in standard communication scenarios, which provides maximum phase separation between adjacent points and immunity to corruption. By employing a uniform distribution, an analytical expression for the phase gain can be obtained, which facilitates the demonstration of our results. When the generation probability of constellations is not equiprobable, this situation has further been discussed in \cite{Ma2022Optimal}. It should be pointed out that the author of \cite{Ma2022Optimal} gives the optimal probabilistic constellation shaping design of 2N-PSK codebook, which can effectively improve the covert rate.
\end{remark}}

\subsection{Problem Formulation}
\textcolor{black}{We first point out that the connections between the transmission rate and decoding error probability has been established for BPSK \cite{bash2013limits,chen2021on} and for 2N-PSK \cite{Chen2003General,Topal2021ACountermeasure}}.\footnote{Due to the shared phase angle, the N-BPSK codebook can be treated as a normal BPSK codebook from the perspective of decoding.} Thus, it suffices to focus on covertness in the following.

In Sec. \ref{Hypothesis test}, we assume that Willie can construct an optimal statistical hypothesis test. Based on the covertness constraint in \eqref{Eq.6}, we need to ensure that all KL-divergences are less than $\epsilon$, i.e.,
\begin{align}
\max\{\mathcal{D}_{B},\mathcal{D}_{2N},\mathcal{D}_{NB}\}&\leq\epsilon.
\end{align}

In the BPSK codebook, we have no additional improvement on covertness compared to covert communication in real-valued AWGN channel\cite{bash2013limits}. Considering the symmetry of complex-Gaussian noise, it is not hard to understand that different initial phase angles have the same effect on covertness. Thus, the result is not surprising. However, it is interesting to investigate if there is an improvement in covertness performance when each symbol is deflected with different phase angles. To measure the improvement that we achieve from the phase, we take the ratio of the KL divergences between the codebook BPSK and other codebooks as the phase gain. Specifically, the phase gain obtained in the codebook 2N-PSK and N-BPSK are defined as follows:
\begin{definition}
\label{def.1}
For any $0<\epsilon<1$, and for any $\beta>0$ such that $\lim_{n \to \infty } \mathcal{D}_B\leq \epsilon$~\footnote{It is worth noting that $\mathcal{D}_{B}$, $\mathcal{D}_{NB}$ and $\mathcal{D}_{2N}$ are functions of $\beta$.},
\begin{align}
    \alpha_1(N)&=\lim_{n \to \infty } \frac{\mathcal{D}_{2N}}{\mathcal{D}_B},\label{defeq1}\\
    \alpha_2(N)&=\lim_{n \to \infty } \frac{\mathcal{D}_{NB}}{\mathcal{D}_B}.\label{defeq2}
\end{align}
\end{definition}

\section{Main Result}

 \begin{theorem}\label{theorem1}
 For any number of angle pairs $N\geq2$, any covertness parameter $0<\epsilon<1$, let $\beta=\big(\frac{4\epsilon}{n}\big)^{\frac{1}{4}}\frac{\sigma}{A}$, we have
 \begin{align}
 &\mathcal{D}_B=\epsilon+\mathcal{O}(\epsilon^{\frac{3}{2}}n^{-\frac{1}{2}}),\\
     &\alpha_1(N)=\alpha_2(N)=\frac{1}{2}.
 \end{align}
\end{theorem}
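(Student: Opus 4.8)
The plan is to extract, for each codebook, the leading coefficient in the small amplitude $a\triangleq A\beta/\sigma$, which obeys $a^4=4\epsilon/n\to 0$. For BPSK and 2N-PSK the output laws $Q_1^n$ and $\widetilde Q_1^n$ are genuine $n$-fold products, so $\mathcal D_B=n\,\mathcal D(q_1\|q_0)$ and $\mathcal D_{2N}=n\,\mathcal D(\widetilde q_1\|q_0)$, and it suffices to expand each single-letter divergence to order $a^4$. After Willie subtracts the recovered RF term, projecting each complex observation onto a signal direction turns the single-letter likelihood ratios into $q_1/q_0=e^{-a^2/2}\cosh(a v)$ and $\widetilde q_1/q_0=\tfrac1N e^{-a^2/2}\sum_{t=1}^N\cosh(a v_t)$, where $v$ and $v_t$ are projections that are standard Gaussian under $q_0$, with $\mathrm{Cov}(v_s,v_t)=\rho_{st}\triangleq\cos((s-t)\pi/N)$.

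Because $\cosh$ is even there is no linear-in-$a$ term, so the $a^2$ contributions cancel, and a fourth-order expansion of $\log\cosh$ using $\mathbb E_{q_0}[v^2]=1$ and $\mathbb E_{q_0}[v^4]=3$ shows that both single-letter divergences take the common form $\tfrac{a^4}{8}\,\mathrm{Var}_{q_0}(R)+\mathcal O(a^6)$, with $R=v^2$ (variance $2$) for BPSK and $R=\tfrac1N\sum_t v_t^2$ for 2N-PSK. Hence $\mathcal D(q_1\|q_0)=\tfrac{a^4}{4}+\mathcal O(a^6)$ and $\mathcal D_B=n\tfrac{a^4}{4}+n\,\mathcal O(a^6)=\epsilon+\mathcal O(\epsilon^{3/2}n^{-1/2})$. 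For the 2N-PSK variance I would invoke Isserlis' identity $\mathrm{Cov}(v_s^2,v_t^2)=2\rho_{st}^2$ to reduce it to $\tfrac{2}{N^2}\sum_{s,t}\cos^2((s-t)\pi/N)$, and then evaluate this sum via $\cos^2 x=\tfrac12(1+\cos 2x)$ and the root-of-unity cancellation $\sum_{s=1}^N e^{i2\pi s/N}=0$, valid for every $N\ge 2$, to obtain $\sum_{s,t}\cos^2((s-t)\pi/N)=N^2/2$. Thus the variance equals $1$, $\mathcal D(\widetilde q_1\|q_0)=\tfrac{a^4}{8}+\mathcal O(a^6)$, $\mathcal D_{2N}\to\epsilon/2$, and $\alpha_1(N)=\tfrac12$ for all $N\ge2$.

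The genuinely hard case is N-BPSK, whose law $\widehat Q_1^{(n)}$ is a mixture of $N$ products and does not factorize, so the single-letter reduction fails. Writing $\widehat L\triangleq\widehat Q_1^{(n)}/Q_0^n=\tfrac1N\sum_{t=1}^N\prod_{i=1}^n L_{t,i}$ with $L_{t,i}=e^{-a^2/2}\cosh(a v_{t,i})$, I would set $\widehat L=1+H$, observe $\mathbb E_{Q_0^n}[H]=0$, and use $\mathcal D_{NB}=\mathbb E_{Q_0^n}[(1+H)\log(1+H)]$ together with $\log(1+x)=x-\tfrac12 x^2+\mathcal O(x^3)$ to reduce the leading part to $\tfrac12\mathbb E_{Q_0^n}[H^2]$. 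The second moment is exactly computable: the per-symbol cross moment is $\mathbb E_{q_0}[L_s L_t]=\cosh(a^2\rho_{st})$, so $\mathbb E_{Q_0^n}\big[\prod_i L_{s,i}\prod_i L_{t,i}\big]=\cosh(a^2\rho_{st})^n$ and $\mathbb E_{Q_0^n}[H^2]=\tfrac1{N^2}\sum_{s,t}\cosh(a^2\rho_{st})^n-1$. Since $n\log\cosh(a^2\rho_{st})=\tfrac{na^4}{2}\rho_{st}^2+\mathcal O(na^8)=2\epsilon\rho_{st}^2+o(1)$, expanding to first order in $\epsilon$ and reusing $\sum_{s,t}\rho_{st}^2=N^2/2$ gives $\mathbb E_{Q_0^n}[H^2]\to\epsilon$, whence $\mathcal D_{NB}\to\epsilon/2$ and $\alpha_2(N)=\tfrac12$, again for every $N\ge2$.

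I expect the N-BPSK step to be the principal obstacle. First, since $H=\mathcal O_P(\sqrt\epsilon)$ is not uniformly small, I must justify discarding the higher Taylor terms: bound $\mathbb E[H^3]$ and the large-$H$ tail, and verify that the genuinely cross-letter ($s\ne t$, two-symbol) correlations buried in $\mathbb E[H^2]$ enter only at order $\beta^8=\mathcal O(\epsilon^2)$, so they do not perturb the order-$\beta^4$ (order-$\epsilon$) coefficient that defines the phase gain. Second, interchanging the $n\to\infty$ limit with the expectation of $\widehat L\log\widehat L$ requires a uniform-integrability argument. Finally, I would stress that the single identity $\sum_{s,t}\cos^2((s-t)\pi/N)=N^2/2$ governs both $\alpha_1$ and $\alpha_2$ and is insensitive to $N$ as soon as $N\ge2$; this is precisely the analytic reason the phase gain saturates at $2$ rather than growing with $N$.
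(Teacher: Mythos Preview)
Your proposal is correct and follows essentially the same route as the paper: Taylor-expand the log-likelihood ratio, reduce the single-letter cases to the quadratic term $\tfrac{a^4}{8}\mathrm{Var}_{q_0}(R)$, and for N-BPSK compute the cross moments $\mathbb{E}_{q_0}[L_sL_t]=\cosh(a^2\rho_{st})$ together with the identity $\sum_{s,t}\cos^2((s-t)\pi/N)=N^2/2$. Your choice to take expectations under $Q_0^n$ (so that $\mathbb{E}[H]=0$ kills the first term) is a slightly cleaner variant of the paper's $\mathbb{E}_{\widehat Q_1^{(n)}}[\Psi-\tfrac12\Psi^2+\cdots]$ computation, and you correctly flag the Taylor-remainder/uniform-integrability issue that the paper leaves implicit.
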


Theorem \ref{theorem1} shows that a covertness gain of 2 can be obtained by using either the 2N-PSK codebook or the N-BPSK codebook. Compared with the codebook BPSK with only one phase angle pair, a randomly generated codebook with $N$ possible phase angle pairs (where $N \geq 2$) effectively elevates Willie's uncertainty, and thus improves the covertness. Perhaps more interestingly, when the number of phase angle pairs $N >2$, the covertness gain is ``saturated" i.e., it will no longer increase as $N$ increases. This means that we can achieve a covertness gain of $2$ by simply using $2$ phase angle pairs. 
\textcolor{black}{\begin{remark}In this work, we analyze and compare the performances of BPSK, 2N-PSK, and N-BPSK codebooks with the ratio of KL divergence, aiming to better demonstrate the phase gain. In the BPSK codebook, the initial phase angle can be eliminated by rotating the coordinate system, hence the phase is not utilized. However, this is not feasible when $N\geq2$ as the presence of phase angles cannot be removed through any operation. As a consequent, the phase gain can be obtained when $N\geq 2$. Intuitively, when $N\geq 2$, since the shape distribution of constellations in our work is uniform and symmetric, and the noise is symmetric, increasing the value of $N$ does not increase Willie's uncertainty. Thus the ratio of KL divergence tends to $\frac{1}{2}$ when $n\to\infty$. However, the exact value of KL divergence may change, which can be observed in the simulation section. See section \ref{numerical}.
\end{remark}}
As a consequence, by using the codebook 2N-PSK or N-BPSK, Alice can transmit messages to Bob with $\sqrt{2}$ times the power of the codebook BPSK and simultaneously ensure Willie cannot detect the communication. Moreover, via the codebook 2-BPSK, we can achieve a covertness gain of 2 by just confusing Willie about whether Alice deflects $n$ symbols with a particular phase angle or not. Such a simple operation can yield such significant benefits for achieving covertness. The proof is given in Sec. \ref{Codebook 2} and Sec. \ref{Codebook 3}.

\begin{remark}
The rationale of using $\mathcal{D}(Q_1^n\|Q_0^n)$, rather than $\mathcal{D}(Q_0^n\|Q_1^n)$ (see \cite{bash2013limits} and \cite{chen2021on}), as the covertness metric is as follows. We first point out that both $\mathcal{D}(Q_1^n\|Q_0^n)$ and $\mathcal{D}(Q_0^n\|Q_1^n)$ can be used due to the fact that $\mathbb{V}(Q_1^n\|Q_0^n)=\mathbb{V}(Q_0^n\|Q_1^n)$ and the Pinsker's inequality. However, different metrics yields different optimal signaling over AWGN channels. In some previous information-theoretic works (e.g., \cite{wang2016fundamental}, and \cite{tahmasbi2020covert,lee2018covert,zivarifard2022keyless}), $\mathcal{D}(Q_1^n\|Q_0^n)$ has been used to achieve tight results. Following the aforementioned works, we use $\mathcal{D}(Q_1^n\|Q_0^n)$ as the covertness metric in this work. \textcolor{black}{Moreover, similar to our scenario, the authors in \cite{yan2019Gaussian} proved that the upper bound on $\mathcal{D}(Q_1^n\|Q_0^n)$ is a tighter covertness constraint than that on $\mathcal{D}(Q_0^n\|Q_1^n)$, by proving $\mathcal{D}(Q_0^n\|Q_1^n)\leq \mathcal{D}(Q_1^n\|Q_0^n)$.} 
\end{remark}

\begin{remark} [Technical challenges and solutions]
In some previous works, the chain rule has been widely adopted to transform the KL-divergence of $n$-letter product distributions, e.g., $\mathcal{D}(Q_1^n\|Q_0^n)$, into the sum of $n$ single-letter distributions, e.g., $n\mathcal{D}(Q_1\|Q_0)$\cite{bash2013limits,chen2021on,yan2019Gaussian,Yan2019Delay-Intolerant}. However, for the N-BPSK codebook, the probability distribution $\widehat{Q}_1^{(n)}$ is the average of multiple $n$-letter distributions, so it can not be transformed to single-letter distributions by the chain rule. Additionally, we cannot present a good upper bound on $\mathcal{D}(\widehat{Q}_1^{(n)}\|Q_0^{n})$ via simple inequalities, such as the log-sum inequality, with which we cannot achieve any phase gain. Here we take a different approach to explicitly analyze the KL divergence between ${\widehat{Q}_1^{(n)}}$ and ${Q_0^{n}}$. Firstly, using Taylor series expansion of $\displaystyle \lim_{x \to 0}\log(1+x)$, we can obtain a fact that $\displaystyle \lim_{x \to 0}\log(1+x)=x-\frac{1}{2}x^2+\mathcal{O}(x^3)$. Then, we express the KL divergence as $E_{\widehat{Q}_1^{(n)}}\log{\frac{\widehat{Q}_1^{(n)}}{Q_0^{n}}}$, and we prove that $\frac{\widehat{Q}_1^{(n)}}{Q_0^{n}}$ converges to 1 when $n$ is large enough and $\beta$ is sufficiently small. Following by the fact, we can calculate the first three terms of the expansion of $E_{\widehat{Q}_1^{(n)}}\log{\frac{\widehat{Q}_1^{(n)}}{Q_0^{n}}}$, and obtain an approximation of $\mathcal{D}(\widehat{Q}_1^{(n)}\|Q_0^{n})$ by summing up the approximations of three terms. The proof is given in Sec. \ref{Codebook 3}.
\end{remark}

\begin{remark}

We can achieve a phase gain of 2 when all $N$ phase angle pairs are arithmetic sequence. When $N=2$, i.e., 4-PSK or 2-BPSK, only all possible phase angles are arithmetic sequence, we can achieve a phase gain of 2, and the proof is given in Appendix \ref{section:H}. When $N\to\infty$, we have no specific calculation, but we conjecture that we can achieve the ``max gain" only when all $N$ phase angle pairs are arithmetic sequence.
\end{remark}

\section{Proof of Achievability}
\label{Sec.4}
For ease of notation, we omit all normalization terms, e.g., $\frac{1}{\sqrt{{2\pi}\sigma^2}}$, in the following proof.
\subsection{BPSK}
In this case, by the chain rule of KL-divergence and the fact that both $Q_1^n$ and $Q_0^n$ are $n$-letter product distributions, we have $\mathcal{D}_{B}=n\mathcal{D}(Q_1\|Q_0)$.  With some calculations, we can obtain the approximation of $\mathcal{D}_{B}$ as follows:
\begin{align}
    \mathcal{D}_{B}&=\frac{nA^4\beta^4}{4\sigma^4}-\frac{nA^6\beta^6}{6\sigma^6}+\mathcal{O}(\beta^8).
\end{align}
The proof is given in Appendix \ref{section:A}.

Then, by setting $\beta=\big(\frac{4\epsilon}{n}\big)^{\frac{1}{4}}\frac{\sigma}{A}$, when $n$ is large enough, we can obtain
\begin{align}
     \mathcal{D}_{B}=\epsilon+\mathcal{O}(\epsilon^{\frac{3}{2}}n^{-\frac{1}{2}})\label{Eq.10}.
 \end{align}

\subsection{2N-PSK}
\label{Codebook 2}
In this case, considering $N\geq2$, Alice reflects each symbol with a random phase angle ${\theta}$, where ${\theta}$ is independently drawn from $\{\frac{\pi}{N}, \frac{2\pi}{N},...,\frac{(N-1)\pi}{N},\pi\}$. By the chain rule and the fact that $\widetilde{Q}_1^n$ is a $n$-letter product distribution, we have $\mathcal{D}_{2N}=n\mathcal{D}(\widetilde{Q}_1\|{Q}_0)$. Before we show the calculation of $\mathcal{D}_{2N}$, we need the following lemmas.
\begin{lemma}
\label{Lemma 1}
Considering $N\geq3$, $\theta\in\mathbb{R}$, we can obtain some identities as follows:
\begin{align}
    &\sum_{t=1}^{N}\sin{(\frac{2t\pi}{N}+{\theta})}=\sum_{t=1}^{N}\cos{(\frac{2t\pi}{N}+{\theta})}=0\label{Eq.11},\\
    &\sum_{t=1}^{N}\cos^4({\theta}+\frac{t\pi}{N})=\frac{3N}{8}\label{Eq.12},\\
    &\sum_{t=1}^{N}\sin^2({\theta}+\frac{t\pi }{N})\cos^2({\theta}+\frac{t\pi }{N})=\frac{N}{8}\label{Eq.13},\\
    &\sum_{t=1}^{N}\cos^3({\theta}+\frac{t\pi}{N})\sin({\theta}+\frac{t\pi}{N})=0\label{Eq.14}.
\end{align}
\end{lemma}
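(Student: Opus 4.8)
The plan is to reduce all four identities to a single elementary fact about sums of roots of unity, and then expand each trigonometric monomial into a constant plus harmonics that fall under that fact. Concretely, the workhorse I would establish first is that for any phase $\phi\in\mathbb{R}$ and any integer $k$ not divisible by $N$,
\begin{align}
\sum_{t=1}^{N}\cos\Big(\phi+\frac{2kt\pi}{N}\Big)=\sum_{t=1}^{N}\sin\Big(\phi+\frac{2kt\pi}{N}\Big)=0.
\end{align}
This follows by writing the two sums as the real and imaginary parts of $e^{i\phi}\sum_{t=1}^{N}\big(e^{i2k\pi/N}\big)^{t}$ and summing the geometric series: since $k$ is not a multiple of $N$, the ratio satisfies $e^{i2k\pi/N}\neq 1$, while $\big(e^{i2k\pi/N}\big)^{N}=e^{i2k\pi}=1$, so the geometric sum vanishes. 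The first identity of the lemma (the sine and cosine of $\tfrac{2t\pi}{N}+\theta$) is exactly the special case $k=1$, $\phi=\theta$, which is already valid for $N\ge 2$.

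For the remaining three identities I would apply the standard power-reduction and product-to-sum formulas to write each summand as a constant plus cosines or sines of the double and quadruple angle. Using $\cos^4 x=\tfrac{3}{8}+\tfrac{1}{2}\cos 2x+\tfrac{1}{8}\cos 4x$ with $x=\theta+\tfrac{t\pi}{N}$, the sum $\sum_{t}\cos^4(\theta+\tfrac{t\pi}{N})$ splits into the constant part $\tfrac{3N}{8}$ plus $\tfrac12\sum_{t}\cos(2\theta+\tfrac{2t\pi}{N})$ and $\tfrac18\sum_{t}\cos(4\theta+\tfrac{4t\pi}{N})$; the first oscillating sum is the $k=1$ case of the workhorse and the second is the $k=2$ case, so both vanish and only $\tfrac{3N}{8}$ survives. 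The $\sin^2 x\cos^2 x$ identity follows identically from $\sin^2 x\cos^2 x=\tfrac18(1-\cos 4x)$, leaving the constant $\tfrac{N}{8}$ once the $k=2$ harmonic cancels. The $\cos^3 x\sin x$ identity follows from $\cos^3 x\sin x=\tfrac14\sin 2x+\tfrac18\sin 4x$, both of whose sums are the sine versions ($k=1$ and $k=2$) of the workhorse and hence vanish, giving $0$.

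The only genuinely delicate point, and the reason the lemma is stated for $N\ge 3$ rather than $N\ge 2$, is the quadruple-angle ($k=2$) harmonic. The cancellation $\sum_{t}e^{i4t\pi/N}=0$ requires $e^{i4\pi/N}\neq 1$, i.e.\ $N\nmid 2$, which holds for $N\ge 3$ but fails at $N=2$, where $e^{i4\pi/2}=e^{i2\pi}=1$ and the quadruple-angle sum equals $N$ instead of $0$. I would therefore flag explicitly that $N\ge 3$ is precisely what is needed to kill the $\cos 4x$ and $\sin 4x$ terms in the last three identities, so that the borderline case $N=2$ must be treated separately, consistent with the appendix referenced for the $N=2$ phase gain. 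Beyond this observation the computation is routine trigonometric bookkeeping, so I expect no step to pose a real analytic obstacle.
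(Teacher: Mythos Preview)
Your proposal is correct and the overall architecture matches the paper's Appendix~\ref{section:B}: both reduce the three higher-order identities \eqref{Eq.12}--\eqref{Eq.14} via the same power-reduction/product-to-sum formulas to linear combinations of a constant, a $2x$-harmonic, and a $4x$-harmonic, and then kill the harmonics using the base identity \eqref{Eq.11}. The one genuine difference is how \eqref{Eq.11} (and its $k=2$ variant) is proved. The paper stays in the reals: it multiplies $\sum_t\cos(\tfrac{2t\pi}{N}+\theta)$ by $\sin\tfrac{\pi}{N}$, applies the product-to-sum identity, and telescopes. You instead pass to $e^{i\phi}\sum_t (e^{i2k\pi/N})^t$ and sum the geometric series. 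Your route is slightly more economical because it delivers the $k=1$ and $k=2$ cases in one stroke and makes the role of the hypothesis $N\ge 3$ completely transparent (namely $e^{i4\pi/N}\ne 1$), whereas the paper simply asserts that the $\cos(4\theta+\tfrac{4t\pi}{N})$ and $\sin(4\theta+\tfrac{4t\pi}{N})$ sums vanish ``from \eqref{Eq.37}'' without spelling out that this second application needs $N\ge 3$. Either argument is entirely standard; yours just isolates the arithmetic obstruction more cleanly.
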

\begin{proof}
The proof is given in Appendix \ref{section:B}.
\end{proof}

\begin{lemma}
\label{Lemma 2}
Let us define $\chi_t\triangleq \frac{1}{2}(e^{Y_tB}+e^{-Y_tB})$. Performing Taylor series expansion on $B$ at the point zero, we can obtain the expansion of $\log{\frac{\sum_{t=1}^{N}\chi_t}{N}}$ as follows:
\begin{align}
    &\log{\frac{\sum_{t=1}^{N}\chi_t}{N}}=\frac{B^2}{2!}\bigg(\frac{\sum_{t=1}^{N}{{Y_t}^2}}{N}\bigg)+\frac{B^4}{4!}\cdot\frac{\sum_{t=1}^{N}{{Y_t}^4}}{N}\nonumber\\
    &\qquad-\frac{B^4}{4!}\cdot\frac{3{(\sum_{t=1}^{N}{{Y_t}^2}})\times({\sum_{t=1}^{N}{{Y_t}^2}})}{N^2}+\mathcal{O}(B^6)\label{Eq.1}.
\end{align}
\end{lemma}

Recall that the probability distribution $\widetilde{Q}_{1}(x,y)$ in \eqref{Eq.7}, let 
\begin{align}
    Y_t=\frac{A(x\cos{({\theta_0}+\frac{t\pi}{N})}+y\sin{({\theta_0}+\frac{t\pi}{N}}))}{\sigma^2},\label{Eq.18}
\end{align}
then the term $\log{\frac{\widetilde{Q}_1(x,y)}{Q_0(x,y)}}$ can be rewritten as
\begin{align}
    &\log{\frac{\widetilde{Q}_1(x,y)}{Q_0(x,y)}}=-\frac{A^2\beta^2}{2\sigma^2}\nonumber\\
    &\qquad+\log{\bigg[\frac{1}{2N}\sum_{t=1}^{N}\Big[\exp{\Big(Y_t\beta\Big)}+\exp{\Big(-Y_t\beta\Big)}\Big]\bigg]}\label{Eq.19}.
\end{align}

Applying Lemma \ref{Lemma 2}, we can obtain
\begin{align}
    &\log{\frac{\widetilde{Q}_1}{Q_0}}=\mathcal{O}(\beta^6)-\frac{A^2\beta^2}{2\sigma^2}+\frac{\beta^2}{2!}\Bigg(\frac{\sum_{t=1}^{N}{Y_t^2}}{N}\Bigg)\nonumber\\
    &\quad+\frac{\beta^4}{4!}\Bigg(\frac{\sum_{t=1}^{N}{Y_t^4}}{N}-\frac{3{(\sum_{t=1}^{N}{Y_t^2}})\times({\sum_{t=1}^{N}{Y_t^2}})}{N^2}\Bigg).\label{Eq.20}
\end{align}

\begin{figure*}
\begin{align}
     {\sum_{t=1}^{N}{Y_t^2}}&=\sum_{t=1}^{N}\frac{A^2x^2\Big(\cos^2({\theta_0}+\frac{t\pi}{N})\Big)+A^2y^2\Big(\sin^2({\theta_0}+\frac{t\pi}{N})\Big)+2A^2xy\Big(\sin({\theta_0}+\frac{t\pi}{N})\cos({\theta_0}+\frac{t\pi }{N})\Big)}{\sigma^4}\label{Eq.21}\\
     &=\sum_{t=1}^{N}\frac{A^2x^2\Big(\frac{1+\cos(2t\pi/N+2{\theta_0})}{2}\Big)+A^2y^2\Big(\frac{1-\cos(2t\pi/N+2{\theta_0})}{2}\Big)+A^2xy\Big(\sin(2{\theta_0}+\frac{2t\pi }{N})\Big)}{\sigma^4}\label{Eq.22}\\
     &=\frac{NA^2(x^2+y^2)}{2\sigma^4}\label{Eq.23}.
\end{align}\hrulefill
\end{figure*}

We now calculate the term ${\sum_{t=1}^{N}{Y_t^2}}$ in~\eqref{Eq.21}--\eqref{Eq.23} on the top of the next page, where \eqref{Eq.22} follows by the fact that $2\cos^2{x}=1+\cos{(2x)}$, and \eqref{Eq.23} follows from Lemma \ref{Lemma 1}. Similarly, we can obtain
\begin{align}
     &{\sum_{t=1}^{N}{Y_t^4}}=\frac{3NA^4(x^4+y^4+2x^2y^2)}{8\sigma^8}\label{Eq.24},
\end{align}
where \eqref{Eq.24} follows from Appendix \ref{section:C}. With \eqref{Eq.20}, \eqref{Eq.23} and \eqref{Eq.24}, we have
\begin{align}
    &\log{\frac{\widetilde{Q}_1}{Q_0}}=-\frac{A^2\beta^2}{2\sigma^2}+\frac{\beta^2}{2!}\bigg(\frac{A^2(x^2+y^2)}{2\sigma^4}\bigg)\nonumber\\
    &\qquad-\frac{\beta^4}{4!}\bigg(\frac{3A^4(x^4+y^4+2x^2y^2)}{8\sigma^8}\bigg)+\mathcal{O}(\beta^6)\label{Eq.26}.
\end{align}

Then, we can express the divergence $\mathcal{D}(\widetilde{Q}_1\|{Q}_0)$ as follows:
\begin{align}
    &\mathcal{D}(\widetilde{Q}_1\|{Q}_0)=\iint\widetilde{Q}_1\log{\frac{\widetilde{Q}_1}{Q_0}}\mathrm{d}x\mathrm{d}y\\
    &\quad=-\frac{A^2\beta^2}{2\sigma^2}+\iint\widetilde{Q}_1\Bigg[\frac{\beta^2}{2!}\bigg(\frac{A^2(x^2+y^2)}{2\sigma^4}\bigg)\nonumber\\
    &\quad\ \ -\frac{\beta^4}{4!}\bigg(\frac{3A^4(x^4+y^4+2x^2y^2)}{8\sigma^8}\bigg)+\mathcal{O}(\beta^6)\Bigg]\mathrm{d}x\mathrm{d}y.\label{Eq.28}
\end{align}

With some calculations, we can obtain the approximation of KL-divergence $\mathcal{D}(\widetilde{Q}_1\|{Q}_0)$ as
\begin{align}
    \mathcal{D}(\widetilde{Q}_1\|{Q}_0)=\frac{A^4\beta^4}{8\sigma^4}+\mathcal{O}(\beta^6)\label{Eq.29},
\end{align}
where \eqref{Eq.29} follows by Appendix \ref{section:C}. In summary, the number of phase angle pairs is not relevant to the KL divergence.

By setting $\beta=\big(\frac{4\epsilon}{n}\big)^{\frac{1}{4}}\frac{\sigma}{A}$, when $n$ is large enough, we can obtain
\begin{align}
     \mathcal{D}_{2N}&=\frac{1}{2}\epsilon+\mathcal{O}(\epsilon^{\frac{3}{2}}n^{-\frac{1}{2}}).\label{Eq.30}
 \end{align}
 
 Recall Def.~\ref{def.1} and $\mathcal{D}_{B}=\epsilon+\mathcal{O}(\epsilon^{\frac{3}{2}}n^{-\frac{1}{2}})$ in \eqref{Eq.10}, we can obtain
 \begin{align}
     \alpha_1(N)&=\lim_{n \to \infty } \frac{\mathcal{D}_{2N}}{\mathcal{D}_B}=\frac{1}{2}.
 \end{align}

\subsection{N-BPSK}
\label{Codebook 3}

Recall that the probability distribution  $\widehat{Q}_1^{(n)}(x,y)$ in \eqref{Eq.8}, then re-arranging terms yields:
\begin{align}
    \widehat{Q}_1^{(n)}(x^n,y^n)=\frac{1}{N}\sum_{t=1}^{N}{Q_t^n({\mathbf{x,y}})},\label{Def.wide.hQ}
\end{align}
where ${Q_t^n(\mathbf{x,y})}$ is a $n$-letter product distribution of BPSK with phase angle ${\Theta}_t=\theta'+\frac{t\pi}{N}$ with $t=1,2,..,N$. And the single-letter distribution ${Q_t({x,y})}$ is given by
\begin{align}
    &{Q_t{({x,y})}}\nonumber\\
    &=\frac{1}{2}\Bigg[\exp\bigg(-\frac{(x-A\beta\cos{\Theta}_t)^2+(y-A\beta\sin{\Theta}_t)^2}{2\sigma^2}\bigg)\nonumber\\
    &\quad +\exp\bigg(-\frac{(x+A\beta\cos{\Theta}_t)^2+(y+A\beta\sin{\Theta}_t)^2}{2\sigma^2}\bigg)\Bigg].
\end{align}

According to the definition of the KL divergence, we have 
\begin{align}
    &\mathcal{D}(\widehat{Q}_1^{(n)}\|{Q}_0^n)=\iint {\widehat{Q}_1^{(n)}}\log\frac{{\widehat{Q}_1^{(n)}}}{{Q}_0^n}\mathrm{d}{x}\mathrm{d}{y}\\
    &\quad=\iint \frac{1}{N}\sum_{t=1}^{N}{Q_t^n({\mathbf{x,y}})}\log \frac{1}{N}\sum_{p=1}^{N}\frac{Q_p^n({\mathbf{x,y}})}{Q_0^n({\mathbf{x,y}})}\mathrm{d}{x}\mathrm{d}{y}\label{Eq.36}.
\end{align}

Let us define $\Psi\triangleq\frac{1}{N}\sum_{p=1}^{N}\frac{Q_p^n({\mathbf{x,y}})}{Q_0^n({\mathbf{x,y}})}-1$. Note that, when $\beta$ is sufficiently small, which will be determined later, we can obtain $\Psi=\mathcal{O}(\beta^2)$. Following from Taylor series expansion, when $\Psi\to 0$, we have the fact:
\begin{align}
    \log(1+\Psi)=\Psi-\frac{1}{2}\Psi^2+\mathcal{O}(\Psi^3).\label{Eq.35}
\end{align} 

Combing with \eqref{Eq.35}, we can rewrite \eqref{Eq.36} as
\begin{align}
    \mathcal{D}(\widehat{Q}_1^{(n)}\|{Q}_0^n)&=\iint {\widehat{Q}_1^{(n)}}\log\Big(1+{\Psi}\Big)\mathrm{d}{x}\mathrm{d}{y}\\
    &=\iint {\widehat{Q}_1^{(n)}}\Big[\Psi-\frac{1}{2}\Psi^2+\mathcal{O}(\Psi^3)\Big]\mathrm{d}{x}\mathrm{d}{y}\label{Eq.38}.
\end{align}

With some calculation, when $N\geq2$, we can obtain
\begin{align}
    \iint{\widehat{Q}_1^{(n)}}\cdot{\Psi}~\mathrm{d}{x}\mathrm{d}{y}&=\frac{nA^4\beta^4}{4\sigma^4}+\mathcal{O}(\beta^{8}),\label{Eq.39}\\
    \iint{\widehat{Q}_1^{(n)}}\cdot{\Psi}^2~\mathrm{d}{x}\mathrm{d}{y}&=\frac{nA^4\beta^4}{4\sigma^4}+\frac{nA^6\beta^6}{4\sigma^6}+\mathcal{O}(\beta^8),\label{Eq.40}\\
    \iint{\widehat{Q}_1^{(n)}}\cdot{\Psi}^3~\mathrm{d}{x}\mathrm{d}{y}&=\frac{3nA^6\beta^6}{4\sigma^6}+\mathcal{O}(\beta^8).\label{Eq.41}
\end{align}
The proof is given in Appendix \ref{section:D}.

\begin{remark}
The case that $N=1$ is a special case, we cannot apply Lemma~\ref{Lemma 1} due to $\sin{\frac{\pi}{N}}=0$ in \eqref{Eq.57}. Moreover, when $N=1$, the KL-divergence $\mathcal{D}_{NB}$ can also be obtained via this approximation, i.e., $\displaystyle \lim_{x \to 0}\log(1+x)=x-\frac{1}{2}x^2+\mathcal{O}(x^3)$, as $\mathcal{D}_{NB}=\frac{nA^4\beta^4}{4\sigma^4}+\mathcal{O}(\beta^6)$, and the proof is given in Appendix~\ref{section:G}.
\end{remark}

Combing \eqref{Eq.39}, \eqref{Eq.40} and \eqref{Eq.41}, we can calculate \eqref{Eq.38} as
\begin{align}
    &\mathcal{D}(\widehat{Q}_1^{(n)}\|{Q}_0^n)\nonumber\\
    &=\frac{nA^4\beta^4}{4\sigma^4}+\mathcal{O}(\beta^8)-\frac{nA^4\beta^4}{8\sigma^4}+\mathcal{O}(\beta^6)+\mathcal{O}(\beta^6)\\
    &=\frac{nA^4\beta^4}{8\sigma^4}+\mathcal{O}(\beta^6).
\end{align}

By setting $\beta=\big(\frac{4\epsilon}{n}\big)^{\frac{1}{4}}\frac{\sigma}{A}$, when $n$ is large enough, we can obtain
\begin{align}
     \mathcal{D}_{NB}&=\frac{1}{2}\epsilon+\mathcal{O}(\epsilon^{\frac{3}{2}}n^{-\frac{1}{2}})\label{Eq.51}.
 \end{align}
 
Recall Def.~\ref{def.1} and $\mathcal{D}_{B}=\epsilon+\mathcal{O}(\epsilon^{\frac{3}{2}}n^{-\frac{1}{2}})$ in \eqref{Eq.10}, we can obtain
 \begin{align}
     \alpha_2(N)&=\lim_{n \to \infty } \frac{\mathcal{D}_{NB}}{\mathcal{D}_B}=\frac{1}{2}.
 \end{align}
 
\section{Numerical Results}
 \label{numerical}
  \textcolor{black}{In this section, we provide a performance analysis of the BPSK and 2N-PSK codebooks for different reflection amplitudes. Since KL divergence of N-BPSK can only be calculated in $n$ dimensions, it is difficult to simulate the N-BPSK codebook due to the computational constraint. For the BPSK and 2N-PSK codebooks, we set the expected amplitude of the received signal as $A = 1.2W$, and the power of Gaussian noise as $\sigma^2 = 1W$.}

\begin{figure}[ht]
    \centering
   \includegraphics[scale=0.5]{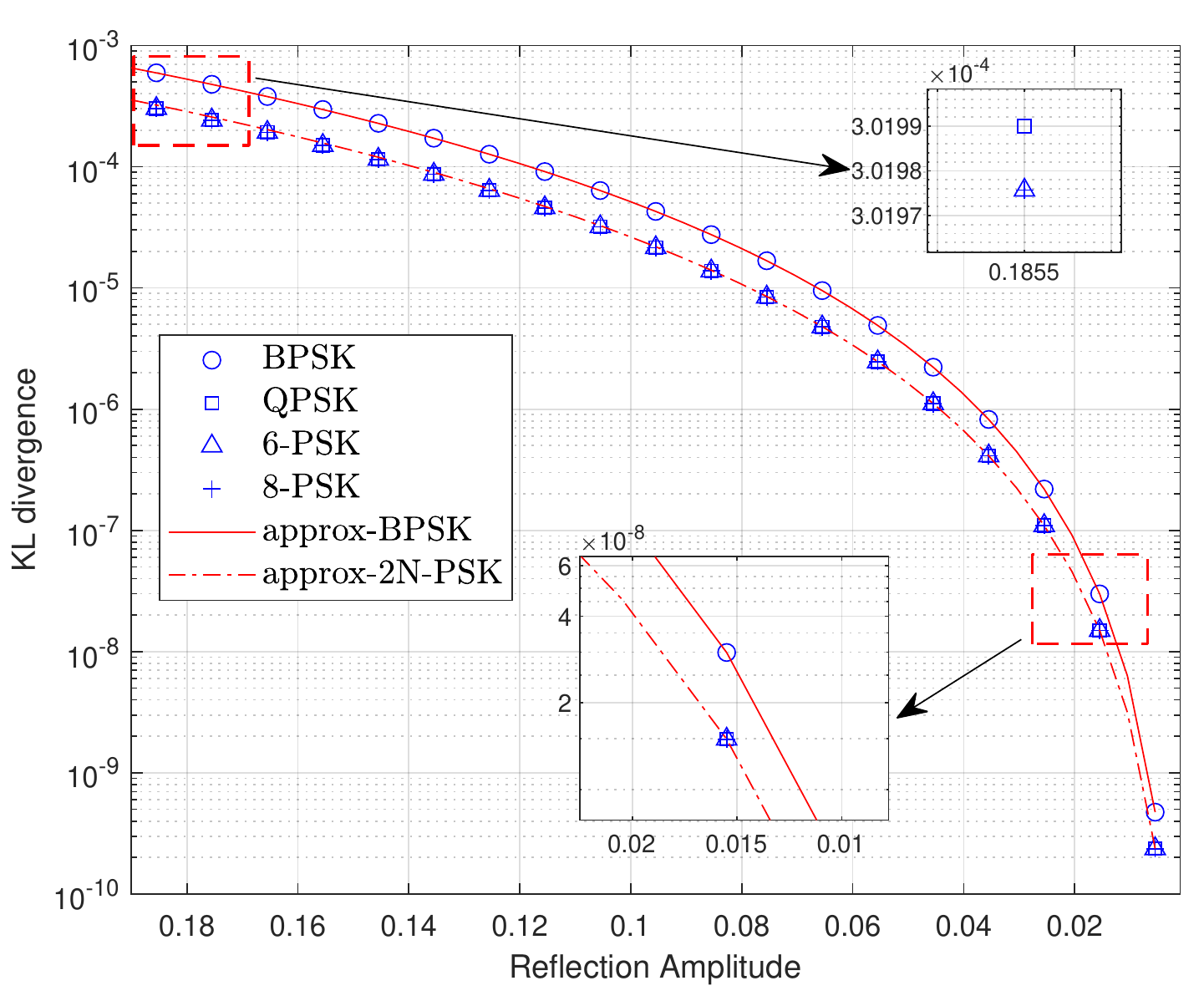}
     \caption{KL divergence versus reflection amplitude}
    \label{fig3}
\end{figure}
\begin{figure}[ht]
    \centering
     \includegraphics[scale=0.5]{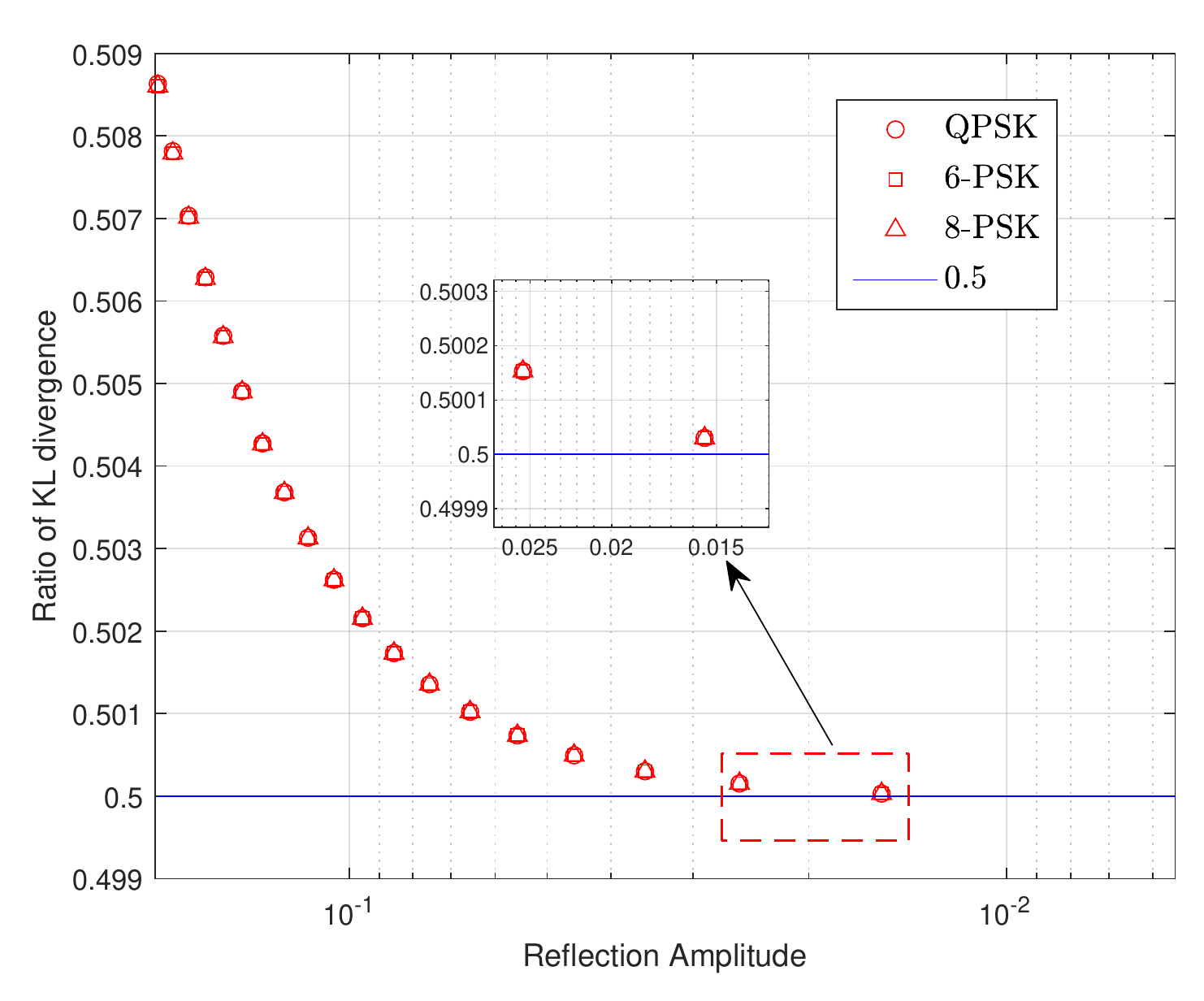}
\caption{Ratio of KL divergence versus reflection amplitude}
\label{fig4}
\end{figure}
\textcolor{black}{Fig.~\ref{fig3} shows the KL divergence of BPSK and 2N-PSK codebooks ($N$ = 2, 3, 4), and provides approximation results, namely $\frac{A^4\beta^4}{4\sigma^4}+\mathcal{O}(\beta^6)$ and $\frac{A^4\beta^4}{8\sigma^4}+\mathcal{O}(\beta^6)$. It can be observed that the 2N-PSK ($N\geq2$) codebook provides better performance (smaller values) than the BPSK codebook, and our approximations are almost consistent with the numerical exact results. Note that as $\beta$ decreases, the approximation becomes more accurate. Moreover, when $\beta$ is not small enough, it can be observed that the KL divergence changes with the number of phase angles $N$.}

\textcolor{black}{Fig.~\ref{fig4} shows the ratio of the KL divergences between 2N-PSK ($N$ = 2, 3, 4) and BPSK codebooks at Willie's side. It is worth noting that, as the reflection amplitude $\beta$ decreases, the ratio of KL divergence tends to our theoretical result $\frac{1}{2}$. This phenomenon supports our main theorem (Theorem 1).}

\section{Conclusion and future work}
This work studies covert communication over complex-valued AWGN channels with intelligent reflect surfaces. We have shown that Alice can achieve a covertness gain up to 2 from phase by leveraging Willie's uncertainty of exact phase angles. Specifically, via the codebooks 2N-PSK and N-BPSK, Alice can transmit messages to Bob with $\sqrt{2}$ times the power of the codebook BPSK while rendering Willie’s detector arbitrarily close to ineffective. Moreover, the covertness gain we achieve from phase will not further increase when the number of phase angle pairs is greater than 2, even if the number approaches infinity. Additionally, via the codebook N-BPSK, we can achieve a covertness gain of 2 from phase by confusing Willie about whether Alice deflects $n$ symbols with a phase angle or not, and such a simple operation can yield a significant covertness benefit.

\textcolor{black}{In this work, we investigate the phase gain in the N-BPSK codebook with a simple deflection, without focusing on finding the optimal scheme to utilize the phase resource. In future work, it would be natural and interesting to study how to utilize phase resources for covert communication optimally.}

\begin{appendices}
\section{}
\label{section:A}
By the chain rule of KL-divergence and the fact that both $Q_1^n$ and $Q_0^n$ are product distributions, we have $\mathcal{D}(Q^n_1\|Q^n_0)=n\mathcal{D}(Q_1\|Q_0)$. We can calculate the KL-divergence between two single-letter distributions as follows:
\begin{align}
    &\mathcal{D} (Q_1\|Q_0 )\nonumber\\
    &={\iint}Q_1(x,y)\log\frac{Q_1(x,y)}{Q_0(x,y)}\mathrm{d}{x}\mathrm{d}{y}\\
    &=-\frac{A^2\beta^2}{2\sigma^2}+{\iint}Q_1(x,y)\nonumber\\
    &\qquad\times\log\bigg[\frac{1}{2}\Big(\exp({\varphi})+\exp({-\varphi})\Big)\bigg] \mathrm{d}{x}\mathrm{d}{y}\\
    &=-\frac{A^2\beta^2}{2\sigma^2}+\bigg[\frac{A^2\beta^2}{2\sigma^2}+\frac{A^4\beta^4}{4\sigma^4}-\frac{A^6\beta^6}{6\sigma^6}+\mathcal{O}(\beta^8)\bigg],
\end{align}
where $\varphi=\frac{A\beta(x\cos(\theta_0+{\theta})+y\sin({\theta_0}+\theta))}{\sigma^2}$, and the last step follows from Taylor series expansion. Then, we can obtain the approximation of $\mathcal{D}(Q_1^n\|Q_0^n)$ as
\begin{align}
    \mathcal{D} (Q_1^n\|Q_0^n)=\frac{nA^4\beta^4}{4\sigma^4}-\frac{nA^6\beta^6}{6\sigma^6}+\mathcal{O}(\beta^8).
\end{align}

\section{Proof of Lemma 1}
\label{section:B}
For any ${\theta}\in\mathbb{R}$, $N\geq3$, we can obtain
\begin{align}
    &\sum_{t=1}^{N}\cos{\Big(\frac{2t\pi}{N}+{\theta}\Big)}\nonumber\\
    &=\frac{1}{\sin{\frac{\pi}{N}}}\sum_{t=1}^{N}\cos\Big({\frac{2t\pi}{N}+{\theta}}\Big)\sin{\frac{\pi}{N}}\label{Eq.57}\\
    &=\frac{1}{2\sin{\frac{\pi}{N}}}\sum_{t=1}^{N}\Bigg[\sin{\bigg({\frac{(2t+1)\pi}{N}+{\theta}}\bigg)}\nonumber\\*
    &\qquad\qquad\qquad\qquad\qquad -\sin{\bigg({\frac{(2t-1)\pi}{N}+{\theta}}\bigg)}\Bigg]\label{Eq.37}\\
    &=\frac{1}{2\sin{\frac{\pi}{N}}}\Bigg[\sin{\bigg({\frac{(2N+1)\pi}{N}+{\theta}}\bigg)}-\sin{\bigg({\frac{\pi}{N}+{\theta}}\bigg)}\Bigg]\\
    &=0.
\end{align}

Similarly, by using the equality that $\sin{(\frac{2t\pi}{N}+{\theta})} = \cos({\frac{2t\pi}{N}+{\theta}}-\frac{\pi}{2})$, after some calculation, we can obtain
\begin{align}
    \sum_{t=1}^{N}\sin{\Big(\frac{2t\pi}{N}+{\theta}\Big)}=0. 
\end{align}

For the rest sine or cos functions with higher degrees, we can reduce the degrees by the double angle formula, e.g., $2\cos^2 x = 1+\cos(2x)$, and calculate similarly as follows:
\begin{align}
    &\sum_{t=1}^{N}\cos^4\bigg({\theta}+\frac{t\pi}{N}\bigg)\nonumber\\
    &=\frac{1}{4}\sum_{t=1}^{N}\bigg(1+2\cos\Big(2{\theta}+\frac{2t\pi}{N}\Big)+\frac{1+\cos\big(4{\theta}+\frac{4t\pi}{N}\big)}{2}\bigg)\\
    &=\frac{3N}{8},
\end{align}
where the last step follows from \eqref{Eq.37}. Similarly, we can obtain
\begin{align}
    &\sum_{t=1}^{N}\sin^2\bigg({\theta}+\frac{t\pi }{N}\bigg)\cos^2\bigg({\theta}+\frac{t\pi }{N}\bigg)\nonumber\\
    &=\frac{1}{4}\sum_{t=1}^{N}\frac{1-\cos\big(4{\theta}+\frac{4t\pi }{N}\big)}{2}\\
    &=\frac{N}{8}\label{Eq.58},
\end{align}
and
\begin{align}
    &\sum_{t=1}^{N}\cos^3\bigg({\theta}+\frac{t\pi}{N}\bigg)\sin\bigg({\theta}+\frac{t\pi}{N}\bigg)\nonumber\\
     &=\frac{1}{8}\sum_{t=1}^{N}\bigg[2\sin\Big(2{\theta}+\frac{2t\pi}{N}\Big)+\sin\Big(4{\theta}+\frac{4t\pi}{N}\Big)\bigg]\\
     &=0\label{Eq.61}.
\end{align}

\section{}
\label{section:C} 
Recall the definition of $Y_t$ in \eqref{Eq.18}, by applying the binomial theory, we can obtain
\begin{align}
     {\sum_{t=1}^{N}{Y_t^4}}
     &=\sum_{t=1}^{N}\frac{A^4\mathbb{X}_t^4+4A^4\mathbb{X}_t^3\mathbb{Y}_t+6A^4\mathbb{X}_t^2\mathbb{Y}_t^2}{\sigma^8}\nonumber\\
     &\qquad +\sum_{t=1}^{N}\frac{4A^4\mathbb{X}_t\mathbb{Y}_t^3+A^4\mathbb{Y}_t^4}{\sigma^8},
\end{align}
where $\mathbb{X}_t=x\cos{({\theta_0}+\frac{t\pi}{N})}$, $\mathbb{Y}_t=y\sin({\theta_0}+\frac{t\pi}{N})$. Similar to \eqref{Eq.23}, after some algebraic calculation and applying Lemma \ref{Lemma 1}, we have
\begin{equation}
   {\sum_{t=1}^{N}{Y_t^4}} = \frac{3NA^4(x^4+y^4+2x^2y^2)}{8\sigma^8}\label{Eq.74}.
\end{equation}

Furthermore, in \eqref{Eq.28}, we have calculated the KL-divergence $\mathcal{D}(\widetilde{Q}_1\|{Q}_0)$ as
\begin{align}
    &{\iint}\widetilde{Q}_1\log{\frac{\widetilde{Q}_1}{Q_0}}\mathrm{d}x\mathrm{d}y\nonumber\\
    &=-\frac{A^2\beta^2}{2\sigma^2}+{\iint}\widetilde{Q}_1\bigg[\mathcal{O}(\beta^6)+\frac{\beta^2}{2!}\Big(\frac{A^2(x^2+y^2)}{2\sigma^4}\Big)\nonumber\\
    &\qquad -\frac{\beta^4}{4!}\Big(\frac{3A^4(x^4+y^4+2x^2y^2)}{8\sigma^8}\Big)\bigg]\mathrm{d}x\mathrm{d}y.
\end{align}

We first calculate the term ${\iint}{\widetilde{Q}_1}\Big[\frac{\beta^2}{2!}\frac{A^2(x^2+y^2)}{2\sigma^4}\Big]\mathrm{d}x\mathrm{d}y$ as follows
\begin{align}
    &{\iint}{\widetilde{Q}_1}\bigg[\frac{\beta^2}{2!}\frac{A^2(x^2+y^2)}{2\sigma^4}\bigg]\mathrm{d}x\mathrm{d}y\nonumber\\
    &=\frac{\beta^2}{2!}\frac{A^2}{2\sigma^4}\frac{1}{N}\bigg(\sum_{t=1}^{N}\bigg[A^2\beta^2\cos^2\Big({\theta_0}+\frac{t\pi}{N}\Big)+\sigma^2\bigg]\nonumber\\
    & \qquad+\sum_{t=1}^{N}\bigg[A^2\beta^2\sin^2\Big({\theta_0}+\frac{t\pi}{N}\Big)+\sigma^2\bigg]\bigg)\\
    &=\frac{A^2\beta^2}{2\sigma^2}+\frac{A^4\beta^4}{4\sigma^4},\label{Eq.69}
\end{align}
where \eqref{Eq.69} follows by Lemma \ref{Lemma 1}. 

Then, the second term $\iint{\widetilde{Q}_1}\Big[\frac{\beta^4}{4!}\Big(\frac{3x^4+3y^4+6x^2y^2}{8\sigma^8}\Big)\Big]\mathrm{d}x\mathrm{d}y$ is calculated in \eqref{app-C:1}--\eqref{app-C:2} on the top of the next page, where \eqref{Eq.72} follows by Lemma \ref{Lemma 1}.

\begin{figure*}
    \begin{align}
    &{\iint}{\widetilde{Q}_1}\bigg[\frac{\beta^4}{4!}\Big(\frac{3x^4+3y^4+6x^2y^2}{8\sigma^8}\Big) \bigg]\mathrm{d}x\mathrm{d}y\nonumber\\
     &=\frac{\beta^4}{4!}\Big(\frac{3}{8\sigma^8}\Big)\Bigg\{\frac{1}{N}\sum_{t=1}^{N}\bigg[A^4\beta^4\cos^4{\Big({\theta_0}+\frac{t\pi}{N}\Big)}+6A^2\beta^2\sigma^2\cos^2\Big({\theta_0}+\frac{t\pi}{N}\Big)+3\sigma^4 \bigg]\nonumber\\
    &\qquad+\frac{1}{N}\sum_{t=1}^{N}\bigg[A^4\beta^4\sin^4{\Big({\theta_0}+\frac{t\pi}{N}\Big)}+6A^2\beta^2\sigma^2\sin^2\Big({\theta_0}+\frac{t\pi}{N}\Big)+3\sigma^4 \bigg]\nonumber\\
    &\qquad\qquad+2\frac{1}{N}\sum_{t=1}^{N}\bigg[\bigg(A^2\beta^2\cos^2{\Big({\theta_0}+\frac{t\pi}{N}\Big)}+\sigma^2 \bigg)\bigg(A^2\beta^2\sin^2{\Big({\theta_0}+\frac{t\pi}{N}\Big)}+\sigma^2 \bigg) \bigg]\Bigg\}\label{app-C:1}\\
     &=\frac{A^4\beta^4}{4!}\Big(\frac{3}{8\sigma^8}\Big)\bigg[6\sigma^4+2\sigma^4+\mathcal{O}({\beta^2}) \bigg]\label{Eq.72}\\
     &=\frac{A^4\beta^4}{8\sigma^4}+\mathcal{O}(\beta^6),\label{app-C:2}
\end{align}\hrulefill
\end{figure*}

\section{}
\label{section:D}
Following by \eqref{Eq.38}, we first calculate the first term $\iint {\widehat{Q}_1^{(n)}}{\Psi}\mathrm{d}x\mathrm{d}y$ as follows
\begin{align}
&{\iint}{\widehat{Q}_1^{(n)}}{\Psi}\mathrm{d}x\mathrm{d}y\nonumber\\
     &=\frac{1}{N}\sum_{t=1}^{N}{\iint}{Q_t^n({\mathbf{x,y}})}\times\bigg(\sum_{p=1}^{N}\frac{1}{N}\frac{Q_p^n({\mathbf{x,y}})}{Q_0^n({\mathbf{x,y}})}-1 \bigg)\mathrm{d}x\mathrm{d}y\\
     &=\frac{1}{N^2}\sum_{t=1}^{N}\sum_{p=1}^{N}{\iint}{Q_t^n({\mathbf{x,y}})}\frac{Q_p^n({\mathbf{x,y}})}{Q_0^n({\mathbf{x,y}})}\mathrm{d}x\mathrm{d}y-1\\
     &=\frac{1}{N^2}\sum_{t=1}^{N}\sum_{p=1}^{N}\prod_{i=1}^n{\iint}{Q_t({\mathbf{x,y}})}\frac{Q_p({\mathbf{x,y}})}{Q_0({\mathbf{x,y}})}\mathrm{d}x\mathrm{d}y-1\\
    &=\frac{1}{N^2}\sum_{t=1}^{N}\sum_{p=1}^{N} \bigg[\frac{1}{2}\exp\bigg({\frac{A^2\beta^2\cos{\Delta_{pt}}}{\sigma^2}}\bigg)\nonumber\\
    &\qquad\qquad\qquad+\frac{1}{2}\exp\bigg({-\frac{A^2\beta^2\cos{\Delta_{pt}}}{\sigma^2}}\bigg) \bigg]^n -1\label{Eq.76}\\
    &=\frac{1}{N^2}\sum_{t=1}^{N}\sum_{p=1}^{N}\bigg(1+\frac{nA^4\beta^4\cos^2{\Delta}_{pt}}{2\sigma^4}+\mathcal{O}(\beta^{8}) \bigg)-1\label{Eq.77}\\
    &=\frac{nA^4\beta^4}{4\sigma^4}+\mathcal{O}(\beta^{8})\label{Eq.78},
\end{align}
where $\Delta_{pt}=({\Theta}_p-{\Theta}_t)$, and the proof of \eqref{Eq.76} is provided in \eqref{app-D:1}--\eqref{app-D:2} on the top of the page after next, and \eqref{Eq.77} follows from Taylor series expansion which is given by

\begin{figure*}
\begin{align}
    &{\iint}{Q_t({\mathbf{x,y}})}\frac{Q_p({\mathbf{x,y}})}{Q_0({\mathbf{x,y}})}\mathrm{d}x\mathrm{d}y\nonumber\\
    &=\frac{1}{4}{\iint}\bigg(\exp\bigg({-\frac{(x-A\beta\cos{{\Theta}}_{t})^2+(y-A\beta\sin{{\Theta}}_{t})^2}{2\sigma^2}}\bigg)+\exp\bigg({-\frac{(x+A\beta\cos{{\Theta}}_{t})^2+(y+A\beta\sin{{\Theta}}_{t})^2}{2\sigma^2}}\bigg) \bigg)\nonumber\\
    &\qquad\times\bigg[\exp\bigg({-\frac{-2A\beta(x\cos({{\Theta}}_{t}+\Delta_{tp})+y\sin({{\Theta}}_{t}+\Delta_{tp}))+A^2\beta^2}{2\sigma^2}}\bigg)\nonumber\\
    &\qquad\qquad +\exp\bigg({-\frac{2A\beta(x\cos({{\Theta}}_{t}+\Delta_{tp})+y\sin({{\Theta}}_{t}+\Delta_{tp}))+A^2\beta^2}{2\sigma^2}}\bigg)\bigg]\mathrm{d}x\mathrm{d}y\label{app-D:1}\\
    &=\frac{1}{4}{\iint}\bigg[\exp\bigg({-\frac{(x-A\beta(\cos{{{\Theta}}_{t}}+\cos{({{\Theta}}_{t}+\Delta_{tp})}))^2+(y-A\beta(\sin{{{\Theta}}_{t}}+\sin{({{\Theta}}_{t}+\Delta_{tp})}))^2}{2\sigma^2}} \bigg) \nonumber\\
    &\qquad +\exp\bigg({-\frac{(x+A\beta(\cos{{{\Theta}}_{t}}+\cos{({{\Theta}}_{t}+\Delta_{tp})}))^2+(y+A\beta(\sin{{{\Theta}}_{t}}+\sin{({{\Theta}}_{t}+\Delta_{tp})}))^2}{2\sigma^2}}\bigg) \bigg]\nonumber\\
    &\qquad\quad\times\exp\bigg({-\frac{2A^2\beta^2-2A^2\beta^2(1+\cos{{\Theta}}_{t}\cos{({{\Theta}}_{t}+\Delta_{tp})+\sin{{\Theta}}_{t}\sin{({{\Theta}}_{t}+\Delta_{tp}))}}}{2\sigma^2}}\bigg)\nonumber\\
    &\qquad\quad\quad+\bigg[\exp\bigg({-\frac{(x-A\beta(\cos{{{\Theta}}_{t}}-\cos{({{\Theta}}_{t}+\Delta_{tp})}))^2+(y-A\beta(\sin{{{\Theta}}_{t}}-\sin{({{\Theta}}_{t}+\Delta_{tp})}))^2}{2\sigma^2}}\bigg)  \nonumber\\
    &\qquad\quad\quad\quad +\exp\bigg({-\frac{(x+A\beta(\cos{{{\Theta}}_{t}}-\cos{({{\Theta}}_{t}+\Delta_{tp})}))^2+(y+A\beta(\sin{{{\Theta}}_{t}}-\sin{({{\Theta}}_{t}+\Delta_{tp})}))^2}{2\sigma^2}}\bigg) \bigg]\nonumber\\
    &\qquad\quad\quad\quad\quad \times\exp\bigg({-\frac{2A^2\beta^2-2A^2\beta^2(1-\cos{{\Theta}}_{t}\cos{({{\Theta}}_{t}+\Delta_{tp})-\sin{{\Theta}}_{t}\sin{({{\Theta}}_{t}+\Delta_{tp}))}}}{2\sigma^2}}\bigg)\mathrm{d}x\mathrm{d}y\\
     &=\frac{1}{4}\bigg[2\exp\bigg({-\frac{2A^2\beta^2-A^2\beta^2(2+2\cos{\Delta}_{tp})}{2\sigma^2}}\bigg)+2\exp\bigg({\frac{2A^2\beta^2-A^2\beta^2(2-2\cos{\Delta}_{tp})}{2\sigma^2}}\bigg) \bigg]\\
     &=\frac{1}{2}\bigg[\exp\bigg({-\frac{A^2\beta^2(\cos{\Delta}_{tp})}{\sigma^2}}\bigg)+\exp\bigg({\frac{A^2\beta^2(\cos{\Delta}_{tp})}{\sigma^2}}\bigg) \bigg],\label{app-D:2}
\end{align}\hrulefill
\end{figure*}

\begin{align}
    &\bigg(\frac{1}{2}\exp\bigg({\frac{A^2\beta^2\cos{\Delta_{pt}}}{\sigma^2}}\bigg)+\frac{1}{2}\exp\bigg({-\frac{A^2\beta^2\cos{\Delta_{pt}}}{\sigma^2}}\bigg)\bigg)^n\nonumber\\
     &= 1+{\frac{nA^4\beta^4\cos^2{\Delta_{pt}}}{2\sigma^4}}\nonumber\\
     &\qquad\qquad-\frac{A^8\beta^8\cos^4{{\Delta_{pt}}}(2n-3n^2)}{24\sigma^8}+\mathcal{O}(\beta^{12})\label{Eq.83},
\end{align}

Further, the proof of \eqref{Eq.78} is given by
\begin{align}
    &\frac{1}{N^2}\sum_{t=1}^{N}\bigg(\sum_{p=1}^{N}\Big(\frac{nA^4\beta^4\cos^2{\Delta}_{pt}}{2\sigma^4}\Big) \bigg)\\
    &=\frac{nA^4\beta^4}{2\sigma^4}\frac{1}{N^2}\sum_{t=1}^{N}\sum_{p=1}^{N}\cos^2\bigg(\frac{(p-t)\pi}{N} \bigg)\\
    &=\frac{nA^4\beta^4}{2\sigma^4}\frac{1}{N^2}\sum_{t=1}^{N}\sum_{p=1}^{N}\frac{1+\cos{\frac{2(p-t)\pi}{N}}}{2}\\
    &=\frac{nA^4\beta^4}{4\sigma^4} + \frac{nA^4\beta^4}{4\sigma^4}\frac{1}{N^2} \frac{1}{\sin{\frac{\pi}{N}}}\sum_{t=1}^{N}\sum_{p=1}^{N}\nonumber\\
    &\qquad\frac{1}{2}\bigg(\sin{\frac{(2p+1)\pi-2t}{N}}-\sin{\frac{(2p-1)\pi-2t}{N}}\bigg)\\
    &=\frac{nA^4\beta^4}{4\sigma^4}.\label{Eq.87}
\end{align}

Now we calculate the second term of the expansion of $\mathcal{D}(\widehat{Q}_1^{(n)}\|{Q}_0^n)$. Following by \eqref{Eq.38}, we can rearrange the second term $ {\iint} {\widehat{Q}_1^{(n)}}{\Psi}^2\mathrm{d}x\mathrm{d}y$ as follows
\begin{align}
    {\iint} {\widehat{Q}_1^{(n)}}{\Psi}^2\mathrm{d}x\mathrm{d}y&={\iint}\Big({\widehat{Q}_1^{(n)}}({\Psi}+1)^2\nonumber\\
    &\qquad -2{\widehat{Q}_1^{(n)}}({\Psi}+1)+{\widehat{Q}_1^{(n)}} \Big)\mathrm{d}x\mathrm{d}y\label{Eq.93}.
\end{align}

Then, recall the definition of $\widehat{Q}_1^{(n)}(x^n,y^n)$ in \eqref{Def.wide.hQ}, we calculate the first term of \eqref{Eq.93} as follows
\begin{align}
     &{\iint} {\widehat{Q}_1^{(n)}}({\Psi}+1)^2\mathrm{d}x\mathrm{d}y\nonumber\\
      &={\iint} {\widehat{Q}_1^{(n)}}\bigg(\sum_{p=1}^{N}\frac{1}{N}\frac{Q_p^n({\mathbf{x,y}})}{Q_0^n({\mathbf{x,y}})} \bigg)^2\mathrm{d}x\mathrm{d}y\\
      &={\iint} \frac{1}{N}\sum_{t=1}^{N}{Q_t^n({\mathbf{x}})}\times\sum_{p=1}^{N}\frac{1}{N}\frac{Q_p^n({\mathbf{x,y}})}{Q_0^n({\mathbf{x,y}})}\nonumber\\
      &\qquad\qquad\times\sum_{z=1}^{N}\frac{1}{N}\frac{Q_z^n({\mathbf{x,y}})}{Q_0^n({\mathbf{x,y}})}\mathrm{d}x\mathrm{d}y\\
     &= \frac{1}{N^3}\sum_{t=1}^{N}\sum_{p=1}^{N}\sum_{z=1}^{N}\prod_{i=1}^{n}{\iint}{Q_t({{x,y}})}\frac{{Q_z({{x,y}})}Q_p({{x,y}})}{Q_0^2({{x,y}})}\mathrm{d}x\mathrm{d}y\\
    &=1+\frac{3nA^4\beta^4}{4\sigma^4}+\frac{nA^6\beta^6}{4\sigma^6}+\mathcal{O}(\beta^8)\label{Eq.98},
\end{align}
and the proof of \eqref{Eq.98} is provided in \eqref{app-D:b1}--\eqref{eq:qq1} on the top of the next page and \eqref{eq:qq2}--\eqref{Eq.104} on the top of the page after next, where $\Delta_{tz}=({\Theta}_t-{\Theta}_z)$.

\begin{figure*}
\begin{align}
    &\frac{1}{N^3}\sum_{t=1}^{N}\sum_{p=1}^{N}\sum_{z=1}^{N}\prod_{i=1}^{n}{\iint}{Q_t({{x,y}})}\frac{{Q_z({{x,y}})}Q_p({{x,y}})}{Q_0^2({{x,y}})}\mathrm{d}x\mathrm{d}y\nonumber\\
    &=\frac{1}{N^3}\sum_{t=1}^{N}\sum_{p=1}^{N}\sum_{z=1}^{N}\prod_{i=1}^{n}\frac{1}{4}\bigg[\exp\bigg({-\frac{A^2\beta^2(-\cos{(\Delta_{tp}-\Delta_{ts})-\cos{\Delta_{tp}}-\cos{\Delta_{ts}}})}{\sigma^2}}\bigg)  \nonumber\\
    &\qquad+\exp\bigg({-\frac{A^2\beta^2(-\cos{(\Delta_{tp}-\Delta_{ts})+\cos{\Delta_{tp}}+\cos{\Delta_{ts}}})}{\sigma^2}}\bigg)+\exp\bigg({-\frac{A^2\beta^2(\cos{(\Delta_{tp}-\Delta_{ts})+\cos{\Delta_{tp}}-\cos{\Delta_{ts}}})}{\sigma^2}}\bigg)\nonumber\\
    & \qquad\qquad +\exp\bigg({-\frac{A^2\beta^2(\cos{(\Delta_{tp}-\Delta_{ts})-\cos{\Delta_{tp}}+\cos{\Delta_{ts}}})}{\sigma^2}}\bigg) \bigg]\label{app-D:b1}\\
    &= \frac{1}{N^3}\sum_{t=1}^{N}\sum_{p=1}^{N}\sum_{z=1}^{N}\Bigg[1+\frac{nA^4\beta^4}{2\sigma^4}+\Big(\frac{2nA^4\beta^4}{2\sigma^4}+\frac{nA^6\beta^6}{2\sigma^6}\Big)(\cos^2{\Delta_{tz}}\cos^2{\Delta_{tp}}+\cos{\Delta_{tz}}\cos{\Delta_{tp}}\sin{\Delta_{tz}}\sin{\Delta_{tp}})+\mathcal{O}(\beta^8)\Bigg].\label{eq:qq1}
\end{align}\hrulefill
\end{figure*}
\begin{figure*}
\begin{align}
    &\frac{1}{N^3}\sum_{t=1}^{N}\sum_{p=1}^{N}\sum_{z=1}^{N}(\cos^2{\Delta_{tz}}\cos^2{\Delta_{tp}}+\cos{\Delta_{tz}}\cos{\Delta_{tp}}\sin{\Delta_{tz}}\sin{\Delta_{tp}})\nonumber\\
     &=\frac{1}{N^3}\sum_{t=1}^{N}\sum_{p=1}^{N}\sum_{z=1}^{N}(\cos^2{\Delta_{tz}}\cos^2{\Delta_{tp}})+\frac{1}{4}\frac{1}{N^3}\sum_{t=1}^{N}\sum_{p=1}^{N}\sum_{z=1}^{N}(\sin{2\Delta_{tz}}\sin{2\Delta_{tp}})\label{eq:qq2}\\
     &=\frac{1}{N^3}\sum_{t=1}^{N}\bigg(\sum_{p=1}^{N}\cos^2{\Delta_{tp}} \bigg)\bigg(\sum_{z=1}^{N}\cos^2{\Delta_{tz}} \bigg)+\frac{1}{4}\frac{1}{N^3}\sum_{t=1}^{N}\sum_{p=1}^{N}\sum_{z=1}^{N}\bigg(\sin{\frac{2\pi(t-z)}{N}}\sin{\frac{2\pi(t-p)}{N}} \bigg)\\
     &= \frac{1}{N^3}\sum_{t=1}^{N}\bigg(\sum_{p=1}^{N}\cos^2{\Delta_{tp}} \bigg)\bigg(\sum_{z=1}^{N}\cos^2{\Delta_{tz}} \bigg)+
     \frac{1}{2\sin{\frac{\pi}{N}}}\sum_{t=1}^{N}\sum_{p=1}^{N}\sin\bigg({\frac{2\pi(t-p)}{N}} \bigg)\nonumber\\
    &\qquad\qquad\times\sum_{z=1}^{N}\bigg[\sin\bigg({\frac{(-2z+1)\pi}{N}}+\frac{(4t-N)\pi}{2N} \bigg)-\sin\bigg({\frac{(-2z-1)\pi}{N}}+\frac{(4t-N)\pi}{2N} \bigg)\bigg]\\
    &=\frac{1}{2}\times\frac{1}{2}+0 = \frac{1}{4}. \label{Eq.104}
\end{align}\hrulefill
\end{figure*}

Combing \eqref{Eq.78}, \eqref{Eq.93} and \eqref{Eq.98}, the term ${\iint} {\widehat{Q}_1^{(n)}}{\Psi}^2\mathrm{d}x\mathrm{d}y$ can be calculated as follows
\begin{align}
    &{\iint} {\widehat{Q}_1^{(n)}}{\Psi}^2\mathrm{d}x\mathrm{d}y\nonumber\\
    &=1+\frac{3nA^4\beta^4}{4\sigma^4}+\frac{nA^6\beta^6}{4\sigma^6}-2\bigg(\frac{nA^4\beta^4}{4\sigma^4}+1\bigg)\nonumber\\
    &\qquad+1+\mathcal{O}(\beta^8)\\
    &=\frac{nA^4\beta^4}{4\sigma^4}+\frac{nA^6\beta^6}{4\sigma^6}+\mathcal{O}(\beta^8).
\end{align}

Finally, we calculate the third term of the expansion of $\mathcal{D}(\widehat{Q}_1^{(n)}\|{Q}_0^n)$. Re-arranging the third term ${\iint} {\widehat{Q}_1^{(n)}}{\Psi}^3$ yields:
\begin{align}
    &{\iint} {\widehat{Q}_1^{(n)}}{\Psi}^3\mathrm{d}x\mathrm{d}y\nonumber\\
    &={\iint} \bigg({\widehat{Q}_1^{(n)}}({\Psi}+1)^3-3{\widehat{Q}_1^{(n)}}({\Psi}+1)^2\nonumber\\
    &\qquad+3{\widehat{Q}_1^{(n)}}({\Psi}+1)-{\widehat{Q}_1^{(n)}} \bigg)\mathrm{d}x\mathrm{d}y.
\end{align}

Due to the limit of the space, after some calculation, we can obtain 
\begin{align}
    &{\iint} {\widehat{Q}_1^{(n)}}({\Psi}+1)^3\mathrm{d}x\mathrm{d}y\nonumber\\
    &=1+\frac{3nA^4\beta^4}{2\sigma^4}+\frac{3nA^6\beta^6}{4\sigma^6}+\mathcal{O}(\beta^8)\label{Eq.118}.
\end{align}

With \eqref{Eq.78}, \eqref{Eq.98} and \eqref{Eq.118}, we can calculate the term ${\iint} {\widehat{Q}_1^{(n)}}{\Psi}^3\mathrm{d}x\mathrm{d}y$ as follows
\begin{align}
    &{\iint} {\widehat{Q}_1^{(n)}}{\Psi}^3\mathrm{d}x\mathrm{d}y\nonumber\\
    &={\iint} \bigg({\widehat{Q}_1^{(n)}}({\Psi}+1)^3-3{\widehat{Q}_1^{(n)}}({\Psi}+1)^2\nonumber\\
    &\qquad+3{\widehat{Q}_1^{(n)}}({\Psi}+1)-{\widehat{Q}_1^{(n)}} \bigg)\mathrm{d}x\mathrm{d}y\\
    &=\bigg(1+\frac{3nA^4\beta^4}{2\sigma^4}+\frac{3nA^6\beta^6}{4\sigma^6}\bigg)\nonumber\\*
    &\qquad-3\bigg(1+\frac{3nA^4\beta^4}{4\sigma^4}+\frac{nA^6\beta^6}{4\sigma^6}\bigg)\nonumber\\
    &\qquad\qquad +3\bigg(1+\frac{nA^4\beta^4}{4\sigma^4}\bigg)-1+\mathcal{O}(\beta^8)\\
    &=\frac{3nA^6\beta^6}{4\sigma^6}+\mathcal{O}(\beta^8).
\end{align}

\section{Proof of N=1}
\label{section:G}
Note that when $N=1$, the term $\Delta_{tp}$ in \eqref{Eq.77} equals zero, then the term ${\iint} {\widehat{Q}_1^{(n)}}{\Psi}\mathrm{d}x\mathrm{d}y$ can be calculated as
\begin{align}
    &{\iint} {\widehat{Q}_1^{(n)}}{\Psi}\mathrm{d}x\mathrm{d}y\nonumber\\
    &=\frac{1}{N^2}\sum_{t=1}^{N}\sum_{p=1}^{N}\bigg(1+\frac{nA^4\beta^4\cos^2{\Delta}_{pt}}{2\sigma^4} \bigg)-1+\mathcal{O}(\beta^{8})\\
    &=\frac{nA^4\beta^4}{2\sigma^4}+\mathcal{O}(\beta^{8}).
\end{align}

Similarly, when $N=1$, we can calculate ${\iint} {\widehat{Q}_1^{(n)}}{\Psi}^2$ as follows
\begin{align}
    &{\iint} {\widehat{Q}_1^{(n)}}{\Psi}^2\mathrm{d}x\mathrm{d}y\nonumber\\
    &={\iint} \bigg({\widehat{Q}_1^{(n)}}({\Psi}+1)^2-2{\widehat{Q}_1^{(n)}}({\Psi}+1)+{\widehat{Q}_1^{(n)}} \bigg)\mathrm{d}x\mathrm{d}y\\
    &=1+\frac{3nA^4\beta^4}{2\sigma^4}+\frac{nA^6\beta^6}{\sigma^6}-2\bigg(\frac{nA^4\beta^4}{2\sigma^4}+1\bigg)+1+\mathcal{O}(\beta^8)\\
    &=\frac{nA^4\beta^4}{2\sigma^4}+\frac{nA^6\beta^6}{\sigma^6}+\mathcal{O}(\beta^8),
\end{align}
and we can calculate ${\iint} {\widehat{Q}_1^{(n)}}{\Psi}^3$ as follows
\begin{align}
    &{\iint} {\widehat{Q}_1^{(n)}}{\Psi}^3\mathrm{d}x\mathrm{d}y\nonumber\\
    &={\iint} \bigg({\widehat{Q}_1^{(n)}}({\Psi}+1)^3-3{\widehat{Q}_1^{(n)}}({\Psi}+1)^2\nonumber\\
    &\qquad+3{\widehat{Q}_1^{(n)}}({\Psi}+1)-{\widehat{Q}_1^{(n)}} \bigg)\mathrm{d}x\mathrm{d}y\\
    &=\bigg(1+\frac{3nA^6\beta^6}{\sigma^6}\bigg)-3\bigg(1+\frac{3nA^4\beta^4}{2\sigma^4}+\frac{nA^6\beta^6}{\sigma^6}\bigg)\nonumber\\
    &\qquad+3\bigg(1+\frac{nA^4\beta^4}{2\sigma^4}\bigg)-1+\mathcal{O}(\beta^8)\\
    &=\frac{3nA^6\beta^6}{\sigma^6}+\mathcal{O}(\beta^8).
\end{align}

Then, we have
\begin{align}
        &\mathcal{D}(\widehat{Q}_1^{(n)}\|{Q}_0^n)\nonumber\\
        &={\iint}{\widehat{Q}_1^{(n)}}\bigg[{\Psi}-\frac{1}{2}{\Psi}^2+{o}({\Psi}^3) \bigg]\mathrm{d}x\mathrm{d}y\\
        &=\bigg(\frac{nA^4\beta^4}{2\sigma^4}\bigg)-\frac{1}{2}\bigg(\frac{nA^4\beta^4}{2\sigma^4}+\mathcal{O}(\beta^6)\bigg)+\mathcal{O}(\beta^6)\\
    &=\frac{nA^4\beta^4}{4\sigma^4}+\mathcal{O}(\beta^6).
\end{align}

\begin{figure*}
    \begin{align}
    \widetilde{Q}_1^{\prime n}(x^n,y^n)&=\prod_{i=1}^{n}\frac{1}{4}\frac{1}{{2\pi}\sigma^2}\bigg[\exp{\bigg(-\frac{(x_i+A\beta\cos{{\theta_0}})^2}{2\sigma^2}-\frac{(y_i+A\beta\sin{{\theta_0}})^2}{2\sigma^2} \bigg)} \nonumber\\
    &\qquad +\exp{\bigg(-\frac{(x_i-A\beta\cos{{\theta_0}})^2}{2\sigma^2}-\frac{(y_i-A\beta\sin{{\theta_0}})^2}{2\sigma^2} \bigg)}\nonumber\\
    &\qquad\qquad+\exp{\bigg(-\frac{(x_i+A\beta\cos{({\theta_0}+\Delta_1)})^2}{2\sigma^2}-\frac{(y_i+A\beta\sin{({\theta_0}+\Delta_1)})^2}{2\sigma^2} \bigg)}\nonumber\\
    &\qquad\qquad\qquad +\exp{\bigg(-\frac{(x_i-A\beta\cos{({\theta_0}+\Delta_1)})^2}{2\sigma^2}-\frac{(y_i-A\beta\sin{({\theta_0}+\Delta_1)})^2}{2\sigma^2} \bigg)} \bigg],\label{def:tildeQ'}\\
    \widehat{Q}_1^{\prime(n)}(x^n,y^n)&= \frac{1}{2}\prod_{i=1}^{n}\frac{1}{2}\frac{1}{{2\pi}\sigma^2}\bigg[\exp{\bigg(-\frac{(x_i+A\beta\cos{{\theta{'}}})^2}{2\sigma^2}-\frac{(y_i+A\beta\sin{{\theta{'}}})^2}{2\sigma^2} \bigg)}  \nonumber\\
    &\qquad+\exp{\bigg(-\frac{(x_i-A\beta\cos{{\theta{'}}})^2}{2\sigma^2}-\frac{(y_i-A\beta\sin{{\theta{'}}})^2}{2\sigma^2} \bigg)}\nonumber\\
    &\qquad\qquad+\exp{\bigg(-\frac{(x_i+A\beta\cos{({\theta{'}}+{\Delta_2})})^2}{2\sigma^2}-\frac{(y_i+A\beta\sin{({\theta{'}}+{\Delta_2})})^2}{2\sigma^2}\bigg)}\nonumber\\
    &\qquad\qquad\qquad +\exp{\bigg(-\frac{(x_i-A\beta\cos{({\theta{'}}+{\Delta_2})})^2}{2\sigma^2}-\frac{(y_i-A\beta\sin{({\theta{'}}+{\Delta_2})})^2}{2\sigma^2} \bigg)} \bigg].\label{def:hatQ'}
\end{align}\hrulefill
\end{figure*}

\section{Proof of arithmetic progression phase angles for the case that \texorpdfstring{$N=2$}{N=2}} 
\label{section:H}
In the case that $N=2$, we consider a general phase construction for the codebook 4-PSK. Specifically, the possible phase angle is either $\Delta_1$ or $\pi$. In other words, all possible codeword symbols are $(\beta,\Delta_1)$, $(-\beta,\Delta_1)$, $(\beta,\pi)$ and $(-\beta,\pi)$. Then, similar to the construction and calculation for the codebook 4-PSK with the arithmetic sequence phase construction, the probability distribution of $\mathbf{S_w}$ under $\mathcal{H}_1$ is given in \eqref{def:tildeQ'} on the top of the next page and the KL-divergence $\mathcal{D}(\widetilde{Q}_1^
{\prime n}\|{Q}^n_0)$ can be approximated as  
\begin{align}
&\mathcal{D}(\widetilde{Q}_1^{\prime n}\|{Q}^n_0)=\frac{nA^4\beta^4}{4\sigma^4}\Big(1-\frac{1}{2}\sin^2{\Delta_1}\Big)\nonumber\\
&\qquad\qquad\qquad-\frac{nA^6\beta^6}{6\sigma^6}\Big(1-\frac{1}{2}\sin^2{\Delta_1}\Big)+\mathcal{O}(\beta^8).
\end{align}

Clearly, when all possible phase angles are arithmetic sequence, i.e., $\Delta_1 =\pi/2$, the KL-divergence $\mathcal{D}(\widetilde{Q}_1^{\prime n}\|{Q}^n_0)$ achieves the minimum.


Further, we consider a general phase construction for the codebook 2-BPSK. The additional phase angle is selected uniformly at random from $\{\Delta_2,\pi\}$. Then, similar to the construction and calculation for the codebook 2-BPSK with the arithmetic sequence phase construction, the probability distribution of $\mathbf{S_w}$ under $\mathcal{H}_1$ is given in \eqref{def:hatQ'} on the top of the next page and the KL-divergence $\mathcal{D}(\widehat{Q}_1^{\prime(n)}\|{Q}^n_0)$ can be approximated as
\begin{align}
    &\mathcal{D}(\widehat{Q}_1^{\prime(n)}\|{Q}^n_0)\nonumber\\
    &=\frac{nA^4\beta^4}{8\sigma^4}\Big(1+2\cos^2{\Delta_2}-\cos^4{\Delta_2}\Big)+\mathcal{O}(\beta^6).
\end{align}

Clearly, when all possible phase angles are arithmetic sequence, i.e., $\Delta_2 =\pi/2$, the KL-divergence $\mathcal{D}(\widehat{Q}_1^{\prime(n)}\|{Q}^n_0)$ achieves the minimum.
\end{appendices}

\bibliography{IEEEabrv,reference}
\bibliographystyle{IEEEtran}

\end{document}